\title{On a Variational Definition for the Jensen-Shannon Symmetrization of Distances based on the Information Radius}
\author{Frank Nielsen\\ Sony Computer Science Laboratories Inc\\ Tokyo, Japan}
\newtheorem{Theorem}{Theorem}
\newtheorem{Proposition}{Proposition}
\newtheorem{Definition}{Definition}
\newtheorem{Remark}{Remark}
\newtheorem{Problem}{Problem}
\newtheorem{Example}{Example}
\date{}
\def\Ray{\mathrm{Ray}}
\def\Exp{\mathrm{Exp}}
\def\calQ{\mathcal{Q}}
\def\Wei{\mathrm{Wei}}
\def\calV{\mathcal{V}}
\def\bbR{\mathbb{R}}
\def\Bhat{\mathrm{Bhat}}
\def\TV{\mathrm{TV}}
\def\vJS{\mathrm{vJS}}
\def\JS{\mathrm{JS}}
\def\KL{\mathrm{KL}}
\def\dmu{\mathrm{d}\mu}
\def\tr{\mathrm{tr}}
\def\calF{\mathcal{F}}
\def\calE{\mathcal{E}}
\def\calN{\mathcal{N}}
\def\calQ{\mathcal{Q}}
\def\calC{\mathcal{C}}
\def\calD{\mathcal{D}}
\def\calR{\mathcal{R}}
\def\calP{\mathcal{P}}
\def\calX{\mathcal{X}}
\def\LSE{\mathrm{LSE}}
\def\dmu{\mathrm{d}\mu}
\def\bbP{\mathbb{P}}
\def\minmax{\mathrm{minmax}}
\begin{document}
  
	\maketitle
\begin{abstract}
We generalize the Jensen-Shannon divergence by considering a variational definition with respect to a generic mean extending thereby the notion of Sibson's information radius.
The variational definition applies to any arbitrary distance and yields another way to define a Jensen-Shannon symmetrization of distances.
When the variational optimization is further constrained to belong to prescribed probability measure families, we get relative Jensen-Shannon divergences and symmetrizations which generalize the concept of information projections.
Finally, we discuss applications of these variational Jensen-Shannon divergences and diversity indices to clustering and quantization tasks of probability measures including statistical mixtures.
\end{abstract}

\noindent{\bf Keywords}: Jensen-Shannon divergence;   diversity index; R\'enyi entropy; information radius; information projection; exponential family; Bregman information; $q$-exponential family; centroid; clustering.

\section{Introduction: Background and motivations}

Let $(\calX,\calF,\mu)$ denote a measure space~\cite{Billingsley-2008} with sample $\calX$, $\sigma$-algebra $\calF$ on the set $\calX$ and positive measure $\mu$ on $(\calF,\mu)$ (e.g., the Lebesgue measure or the counting measure). 
Denote by $\calD=\calD(\calX)$ the set of all densities with full support $\calX$ (Radon-Nikodym derivatives of probability measures with respect to $\mu$):
$$
\calD(\calX) := \left\{p\ :\ \calX\rightarrow\bbR\ :\ p(x)>0\ \mbox{$\mu$-almost everywhere}, \int_\calX p(x)\dmu(x)=1\right\}.
$$

The {\em Jensen-Shannon divergence}~\cite{JS-1991} (JSD) between two densities $p$ and $q$ of $\calD$ is defined by:

\begin{equation}\label{eq:jsd}
D_\JS[p,q] :=  \frac{1}{2}\left(D_\KL\left[p:\frac{p+q}{2}\right]+D_\KL\left[q:\frac{p+q}{2}\right]\right),
\end{equation}
where $D_\KL$ denotes the {\em Kullback-Leibler divergence}~\cite{Kullback-1997,CoverThomasIT-2012} (KLD):

\begin{equation}\label{eq:kld}
D_\KL[p:q]:=  \int_\calX p(x)\log\left(\frac{p(x)}{q(x)}\right)\dmu(x).
\end{equation}

The JSD belongs to the class of {\em $f$-divergences}~\cite{fdivMorimoto-1963,Csiszar-1964,fdiv-AliSilvey-1966}, the {\em invariant decomposable divergences} of information geometry (see~\cite{IG-2016}, pp. 52-57). 
Although the KLD is asymmetric (i.e., $D_\KL[p:q]\not= D_\KL[q:p]$), the JSD is symmetric (i.e., $D_\JS[p,q] =D_\JS[q,p]$).
The notation `:' is used as a parameter separator to indicate that the parameters are not permutation invariant and that the order of parameters is important. 

The {\em $2$-point JSD} of Eq.~\ref{eq:kld} can be extended to a weighted set of $n$ densities 
$\calP:=\{(w_1,p_1), \ldots, (w_n,p_n)\}$ (with positive $w_i$'s normalized to sum up to unity, i.e., $\sum_{i=1}^n w_i=1$) thus providing a {\em diversity index}, i.e., a {\em $n$-point JSD} for $\calP$:
\begin{equation}\label{eq:jsdiv}
D_\JS(\calP) :=   \sum_{i=1}^n w_i D_\KL\left[p_i:\bar{p}\right],
\end{equation}
where $\bar{p}:=\sum_{i=1}^n w_ip_i$ denotes the statistical mixture~\cite{Mixtures-2004} of the densities of $\calP$.

The KLD is also called the {\em relative entropy} since it can be expressed as  the difference between the {\em cross entropy} $h[p:q]$ and the {\em entropy} $h[p]$:
\begin{equation}
D_\KL[p:q] = h[p:q]-h[p],
\end{equation}
 with the cross-entropy and entropy defined respectively by
\begin{eqnarray}
h[p:q]&:=&-\int_\calX p(x)\log q(x)\dmu(x),\\
h[p]&:=&-\int_\calX p(x)\log p(x)\dmu(x).
\end{eqnarray} 
Since $h[p]=h[p:p]$, we may say that the entropy is the self-cross-entropy.

When $\mu$ is the Lebesgue measure, the Shannon entropy is also called the {\em differential entropy}~\cite{CoverThomasIT-2012}.
Although the discrete entropy $H[p]=-\sum_i p_i\log p_i$ (i.e., entropy with respect to the counting measure) is always positive and bounded by $\log|\calX|$, the differential entropy may be negative (e.g., entropy of a Gaussian distribution with small variance).

The Jensen-Shannon diversity index of Eq.~\ref{eq:jsdiv} can be rewritten as:
\begin{equation}\label{eq:hjsd}
D_\JS[p,q] =  h[\bar{p}] - \sum_{i=1}^n w_i h[p_i].
\end{equation}
The JSD representation of Eq.~\ref{eq:hjsd} is a {\em Jensen divergence}~\cite{BR-2011} for the strictly convex negentropy $F(p)=-h(p)$ since the entropy function $h[.]$ is strictly concave, hence its name {\em Jensen-Shannon divergence}.

Since $\frac{p_i(x)}{\bar{p}(x)}\leq \frac{p_i(x)}{w_ip_i(x)}=\frac{1}{w_i}$, it can be shown that the Jensen-Shannon diversity index is  upper bounded by $H(w):=-\sum_{i=1}^n w_i\log w_i$, the {\em discrete Shannon entropy}.
In particular, the JSD is bounded by $\log 2$ although the KLD is unbounded and may even be equal to $+\infty$ when the definite integral diverges (e.g., KLD between the standard Cauchy distribution and the standard Gaussian distribution).
Another nice property of the JSD is that its square root yields a {\em metric distance}~\cite{JSmetric-2003,JSmetric-2004}.
This property further holds for the quantum JSD~\cite{QuantumJSD-2021}. 
Recently, the JSD has gained interest in machine learning.
See for example, the Generative Adversarial Networks~\cite{GAN-2014} (GANs) in deep learning~\cite{DL-2016} where it was proven that minimizing the GAN objective function by adversarial training is equivalent to minimizing a JSD.

The Jensen-Shannon divergence can be skewed using two scalars $\alpha,\beta \in (0,1)$ as follows:
\begin{eqnarray}
D_{\JS,\alpha,\beta}[p:q]&:=&(1-\beta)D_\KL[p:m_\alpha]+\beta D_\KL[q:m_\alpha],\\
&=& h[m_\beta:m_\alpha]-\left((1-\beta)h[p]+\beta h[q]\right),
\end{eqnarray}
where $m_\alpha:=(1-\alpha)p+\alpha q$ and $m_\beta:=(1-\beta)p+\beta q$, and $h[p:q]$ denotes the cross-entropy:
\begin{equation}
h[p:q] := -\int p(x)\log q(x)\dmu(x).
\end{equation}
Thus when $\beta=\alpha$, we have $D_{\JS,\alpha}[p,q]=D_{\JS,\alpha,\alpha}[p,q]=h[m_\alpha]-((1-\alpha)h[p]+\alpha h[q])$ since the self-cross entropy corresponds to the entropy: $h[m_\alpha:m_\alpha]=h[\alpha]$.

A $f$-divergence~\cite{Csiszar-2008} is defined for a convex generator $f$ strictly convex at $1$ with $f(1)=0$ by
\begin{equation}
I_f[p:q]=\int p(x) f\left(\frac{q(x)}{p(x)}\right)\dmu(x).
\end{equation}

The $D_{\JS,\alpha,\beta}$ divergence is a $f$-divergence for the generator:
\begin{equation}
f_{\JS,\alpha,\beta}(u)=-\left((1-\beta)\log\left(\alpha u+(1-\alpha)\right)+\beta u\log\left(\frac{1-\alpha}{u}+\alpha\right)\right).
\end{equation}
We check that the generator $f_{\JS,\alpha,\beta}$ is strictly convex since for any $a\in(0,1)$ and $b\in(0,1)$, we have
\begin{equation}
f_{\JS,\alpha,\beta}''(u)= \frac{a^2(1-b)u+(a-1)^2b}{a^2u^3+2a(1-a)u^2+(a-1)^2u}>0,
\end{equation}
when $u>0$.
We have $D_{\JS,\alpha,\beta}[p:q]=I_{f_{\JS,\alpha,\beta}}[p:q]$.


The Jensen-Shannon {\em principle} of taking the average of the (Kullback-Leibler) divergences between the source parameters to the mid-parameter can be applied to other distances. For example, the Jensen-Bregman divergence is a {\em Jensen-Shannon symmetrization} of the Bregman divergence $B_F$~\cite{BR-2011}:
\begin{equation}
B_F^\JS(\theta_1:\theta_2) := \frac{1}{2}\left( B_F\left(\theta_1:\frac{\theta_1+\theta_2}{2}\right)+B_F\left(\theta_2:\frac{\theta_1+\theta_2}{2}\right) \right).
\end{equation}
The Jensen-Bregman divergence $B_F^\JS$ can also be written as an equivalent Jensen divergence $J_F$:
\begin{equation}
B_F^\JS(\theta_1:\theta_2) = J_F(\theta_1:\theta_2) := \frac{F(\theta_1)+F(\theta_2)}{2}-F\left(\frac{\theta_1+\theta_2}{2}\right),
\end{equation}
where $F$ is a strictly convex function ensuring $ J_F(\theta_1:\theta_2)\geq 0$ with equality iff $\theta_1=\theta_2$.

Because of its use in various fields of information sciences~\cite{JSApp-2009}, various {\em generalizations} of the JSD have been proposed:
These generalizations are 
either based on Eq.~\ref{eq:jsd}~\cite{JSsym-2019} or are based on Eq.~\ref{eq:hjsd}~\cite{nielsen2010family,JensenComparative-2017,JScentroid-2020}.
For example,  the (arithmetic) mixture $\bar{p}=\sum_i w_i p_i$ in Eq.~\ref{eq:jsd} was replaced by an abstract statistical mixture with respect to a generic mean $M$ in~\cite{JSsym-2019} (e.g., geometric mixture induced by the geometric mean), and the two KLDS defining the JSD in Eq.~\ref{eq:jsd} was further averaged using another abstract mean $N$, thus yielding the
 following generic {\em $(M,N)$-Jensen-Shannon divergence}~\cite{JSsym-2019} (abbreviated as $(M,N)$-JSD):

\begin{equation}\label{eq:mnjsd}
D_\JS^{M,N}[p:q] :=   N\left( D_\KL\left[p:(pq)^M_{\frac{1}{2}}\right], D_\KL\left[q:(pq)^M_{\frac{1}{2}}\right]\right),
\end{equation}
where $(pq)^M_{\alpha}$ denotes the statistical weighted $M$-mixture:

\begin{equation}
(pq)^M_{\alpha}:= \frac{M_\alpha(p(x),q(x))}{\int_\calX M_\alpha(p(x),q(x))\dmu(x)}.
\end{equation}
Notice that when $M=N=A$ (the arithmetic mean), Eq.~\ref{eq:mnjsd} of the $(A,A)$-JSD reduces to the ordinary JSD of Eq.~\ref{eq:jsd}.
When the means $M$ and $N$ are symmetric, the $(M,N)$-JSD is symmetric.

In general, a {\em weighted mean} $M_\alpha(a,b)$ for any $\alpha\in [0,1]$ shall satisfy the {\em in-betweeness property}:
\begin{equation}
\min\{a,b\}\leq M_\alpha(a,b)\leq \max\{a,b\}.
\end{equation}
A weighted mean (also called barycenter) can be built from a non-weighted mean $M(a,b)$ (i.e., $\alpha=\frac{1}{2}$) using the dyadic expansion of the weight $\alpha$ as detailed in~\cite{WeightedMean-2018}. 

The three {\em Pythagorean means} defined for positive scalars $a>0$ and $b>0$ are classic examples of means: 
\begin{itemize}
\item The {\em arithmetic mean} $A(a,b)=\frac{a+b}{2}$, 
\item the {\em geometric mean} $G(a,b)=\sqrt{ab}$, and
\item the {\em harmonic mean} $H(a,b)=\frac{2ab}{a+b}$.
\end{itemize}

These Pythagorean means may be interpreted as special instances of another parametric family of means: 
The {\em power means} 
\begin{equation}
P_\alpha(a,b) := \left(\frac{a^\alpha+b^\alpha}{2}\right)^{\frac{1}{\alpha}},
\end{equation}
defined for $\alpha\in\bbR\backslash\{0\}$ (also called H\"older means).
The power means can be extended to the full range $\alpha\in\bbR$ by using the property that 
$\lim_{\alpha\rightarrow 0} P_\alpha(a,b)=G(a,b)$. 
The power means are homogeneous means: $P_\alpha(\lambda a,\lambda b)=\lambda P_\alpha(a,b)$ for any $\lambda>0$.
We refer to the handbook of means~\cite{Bullen-2013}  to get definitions and principles of other means beyond these power means.

Choosing the abstract mean $M$ in accordance with the family of the densities allows one to obtain closed-form formula for the $(M,N)$-JSDs which rely on definite integral calculations.
For example, the JSD between two Gaussian densities does not admit a closed-form formula because of the log-sum integral, but the $(G,N)$-JSD admits a closed-form formula when using {\em geometric statistical mixtures} (i.e., when $M=G$).
As an application of these generalized JSDs, Deasy et al.~\cite{VIGJSD-2020} used the skewed geometric JSD (namely, the $(G_\alpha,A_{1-\alpha})$-JSD for $\alpha\in(0,1)$) which admits  a closed-form formula between normal densities~\cite{JSsym-2019}, and showed how regularizing an optimization task with this G-JSD divergence improved reconstruction and generation of Variational AutoEncoders (VAEs).

More generally, instead of using the KLD, one can also use any arbitrary distance $D$ to define its {\em JS-symmetrization} as follows:
\begin{equation}\label{eq:genmnjsd}
D^\JS_{M,N}[p:q] :=   N\left( D\left[p:(pq)^M_{\frac{1}{2}}\right], D\left[q:(pq)^M_{\frac{1}{2}}\right]\right).
\end{equation}
These symmetrizations may further be skewed by using $M_\alpha$ and/or $N_\beta$ for $\alpha\in (0,1)$ and $\beta\in(0,1)$ yielding the definition~\cite{JSsym-2019}:
\begin{equation}\label{eq:skewgenmnjsd}
D^\JS_{M_\alpha,N_\beta}[p:q] :=   N_\beta\left( D\left[p:(pq)^M_{\alpha}\right], D\left[q:(pq)^M_{\alpha}\right]\right).
\end{equation}
With these notations, the ordinary JSD is $D_\JS={D_\KL}^\JS_{A,A}$, the $(A,A)$ JS-symmetrization of the KLD with respect to the arithmetic means $M=A$ and $N=A$.

In this work, we consider symmetrizing an arbitrary distance $D$ (including the KLD) generalizing the Jensen-Shannon divergence by using a
 {\em variational formula} for the JSD. 
Namely, we observe that the Jensen-Shannon divergence can also be defined as the following minimization problem:

\begin{equation}\label{eq:varjs}
D_\JS[p,q] := \min_{c\in\calD} \frac{1}{2}\left( D_\KL[p:c]+D_\KL[q:c] \right),
\end{equation}
since the optimal density $c$ is proven unique using the calculus of variation~\cite{Sibson-1969,Amari-2007,IG-2014} and corresponds to the mid density $\frac{p+q}{2}$, a statistical (arithmetic) mixture.

\begin{proof}
Let $S(c)=D_\KL[p:c]+D_\KL[q:c]\geq 0$.
We use the method of the Lagrange multipliers for the constrained optimization problem $\min_c S(c)$ such that $\int c(x)\dmu(x)=1$.
Let us minimize $S(c)+\lambda\left(\int c(x)\dmu(x)-1\right)$.
The density $c$ realizing the minimum $S(c)$ satisfies the Euler-Lagrange equation $\frac{\partial L}{\partial c}=0$
where $L(c):=p\log\frac{p}{c}+q\log\frac{q}{c}+\lambda c$ is the Lagrangian.
That is, $-\frac{p}{c}-\frac{q}{c}+\lambda=0$, or equivalently $c=\frac{1}{\lambda}(p+q)$.
Parameter $\lambda$ is then evaluated from the constraint $\int_\calX c(x)\dmu(x)=1$:
We get $\lambda=2$ since $\int_\calX (p(x)+q(x))\dmu(x)=2$.
Therefore, we find that $c(x)=\frac{p(x)+q(x)}{2}$, the mid density of $p(x)$ and $q(x)$.
\end{proof}

The paper is organized as follows:
In~\S\ref{sec:ir}, we recall the rationale and definitions of the R\'enyi $\alpha$-entropy and the R\'enyi $\alpha$-divergence~\cite{Renyi-1961}, and explain the information radius of Sibson~\cite{Sibson-1969} which includes as a special case the ordinary Jensen-Shannon divergence and which can be interpreted as generalized skew Bhattacharyya distances. It is noteworthy to point out that Sibson's work (1969) includes as a particular case of the information radius a definition of the JSD, prior to the reference paper of Lin~\cite{JS-1991} (1991).
In~\S\ref{sec:irgen}, we present the JS-symmetrization variational definition based on a generalization of the information radius with a  generic mean (Definition~\ref{def:genJSD} and Definition~\ref{def:JSsym}).
In~\S\ref{sec:iref}, we constrain the mixture density to belong to a prescribed class of (parametric) probability densities like an exponential family~\cite{EF-2014}, and get relative information radius generalizing information radius and related to the concept of information projections.
Our definition~\ref{def:relJSsym} generalizes the (relative) {\em normal information radius} of Sibson~\cite{Sibson-1969} who considered the multivariate normal family.
As an application of these relative variational JSDs, we consider clustering and quantization of probability densities in \S\ref{sec:clustering}.
Finally, we conclude by summarizing our contributions and discussing related works in~\S\ref{sec:concl}.

\section{R\'enyi entropy and divergence, and Sibson information radius}\label{sec:ir}

R\'enyi~\cite{Renyi-1961} investigated a generalization of the four axioms of Fadeev~\cite{Fadeev-1957} yielding to the unique Shannon entropy~\cite{Csiszar-2008}.  
In doing so, R\'enyi replaced the ordinary weighted arithmetic mean by a more general class of averaging schemes.
Namely, R\'enyi considered the {\em weighted quasi-arithmetic means}~\cite{Kolmogorov-1930}.
A  weighted quasi-arithmetic mean can be induced by a strictly monotonous and continuous function $g$ as follows:
\begin{equation}
M_g(x_1,\ldots,x_n;w_1,\ldots,w_n) := g^{-1}\left(\sum_{i=1}^n w_i g(x_i)\right),
\end{equation}
where the $x_i$'s and the $w_i$'s are positive (the weights are normalized so that $\sum_{i=1}^n w_i=1$).
Since $M_g=M_{-g}$, we may assume without loss of generality that $g$ is a strictly increasing and continuous function.
The quasi-arithmetic means were investigated independently by Kolmogorov~\cite{Kolmogorov-1930}, Nagumo~\cite{Nagumo-1930}, and de Finetti~\cite{Finetti-1931}.

For example, the {power means} $P_\alpha(a,b)=\left(\frac{a^\alpha+b^\alpha}{2}\right)^{\frac{1}{\alpha}}$ 
introduced earlier are quasi-arithmetic means for the generator $g_\alpha^P(u):=u^\alpha$: 
\begin{equation}
P_\alpha(a,b)=M_{g_\alpha^P}\left(a,b;\frac{1}{2},\frac{1}{2}\right).
\end{equation}

R\'enyi proved that among the class of weighted quasi-arithmetic means, only the means induced by the family of functions
\begin{eqnarray}
g_\alpha(u)&:=&2^{(\alpha-1)u},\\
g_\alpha^{-1}(v)&:=& \frac{1}{\alpha-1}\log_2 v,
\end{eqnarray}
for $\alpha>0$ and $\alpha\not=1$ yield a proper generalization of Shannon entropy called nowadays the R\'enyi $\alpha$-entropy.
The {\em R\'enyi $\alpha$-mean} is 
\begin{eqnarray}
M_\alpha^R(x_1,\ldots,x_n;w_1,\ldots,w_n) &=& M_{g_\alpha}\left(x_1,\ldots,x_n;w_1,\ldots,w_n\right),\\
&=&  \frac{1}{\alpha-1}\log_2 \left(\sum_{i=1}^n w_i 2^{(\alpha-1)x_i}\right).
\end{eqnarray}
The R\'enyi $\alpha$-means $M_\alpha^R$ are not power means: They are not homogeneous means~\cite{Amari-2007}.
Let $M_\alpha^R(p,q)=M_\alpha^R\left(p,q;\frac{1}{2},\frac{1}{2}\right)=\frac{1}{\alpha-1}\log_2 \frac{2^{(\alpha-1)p}+2^{(\alpha-1)q}}{2}$. Then we have
$\lim_{\alpha\rightarrow\infty} M_\alpha^R(p,q)=\max\{p,q\}$ and $\lim_{\alpha\rightarrow\infty} M_\alpha^R(p,q)=A(p,q)=\frac{p+q}{2}$.
Indeed, we have
\begin{eqnarray*}
M_\alpha^R(p,q) &=& \frac{1}{\alpha-1}\log_2 \frac{2^{(\alpha-1)p}+2^{(\alpha-1)q}}{2},\\
&=&\frac{1}{\alpha-1}\log_2 \frac{e^{(\alpha-1)p\log 2}+e^{(\alpha-1)q\log 2}}{2},\\
&\approx_{\alpha\rightarrow 0}& \frac{1}{\alpha-1}\log_2 \left(1+(\alpha-1)\frac{p+q}{2}\log 2\right),\\
&\approx_{\alpha\rightarrow 0}& \frac{1}{\alpha-1} \frac{1}{\log 2}{(\alpha-1)\frac{p+q}{2}\log 2},\\
&\approx_{\alpha\rightarrow 0}&  \frac{p+q}{2}= A(p,q),
\end{eqnarray*}
using the following first-order approximations: $e^x\approx_{x\rightarrow 0}=1+x$ and $\log(1+x)\approx_{x\rightarrow 0}=x$.

To get an intuition of the R\'enyi entropy, we may consider generalized entropies derived from quasi-arithmetic means as follows:
\begin{equation}
h_g[p]:=-M_g(\log_2 p_1,\ldots, \log_2 p_n;p_1,\ldots,p_n).
\end{equation} 
When $g(u)=u$, we recover Shannon entropy. When $g_2(u)=2^u$, we get $h_{g_2}[p]=-\log \sum_i p_i^2$, called the collision entropy since $-\log\mathrm{Pr}[X_1=X_2]=h_{g_2}[p]$ when $X_1$ and $X_2$ are independent and identically distributed random variables with
 $X_1\sim p$ and $X_2\sim p$.
When $g(u)=g_\alpha(u)=2^{(\alpha-1)u}$, we get 
\begin{eqnarray}
h_{g_\alpha}[p] &=& -\frac{1}{\alpha-1}\log_2 \left(\sum_i p_i 2^{(\alpha-1)\log_2 p_i} \right),\\
&=&  \frac{1}{1-\alpha}\log_2 \sum_i p_i p_i^{\alpha-1}=\frac{1}{1-\alpha}\log_2 \sum_i p_i^{\alpha}.
\end{eqnarray}

The R\'enyi $\alpha$-entropy~\cite{Renyi-1961} are defined by:
\begin{equation}
h_\alpha^R[p] := \frac{1}{1-\alpha}\log\left( \int_\calX p^\alpha(x)\dmu(x)\right), \quad \alpha\in (0,1)\cup(1,\infty).
\end{equation}
In the limit case $\alpha\rightarrow 1$, the R\'enyi $\alpha$-entropy converges to Shannon entropy:
$\lim_{\alpha\rightarrow 1} h_\alpha^R[p]=h[p]=-\int_\calX p(x)\log p(x)\dmu(x)$.
R\'enyi $\alpha$-entropies are {\em non-increasing} with respect to increasing $\alpha$: 
$h_\alpha^R[p]\geq h_{\alpha'}^R[p]$ for $\alpha<\alpha'$.
In the discrete case (i.e., counting measure $\mu$ on a finite alphabet $\calX$), we can further define $h_0[p]=\log |\calX|$ for $\alpha=0$ (also called max-entropy or Hartley entropy).
The R\'enyi $+\infty$-entropy 
$$
h_{+\infty}[p]=-\log\max_{x\in\calX} p(x)
$$ 
is also called the {\em min-entropy} since the sequence $h_\alpha$ is non-increasing with respect to increasing $\alpha$.

Similarly, R\'enyi obtained the $\alpha$-divergences for $\alpha>0$ and $\alpha\not=1$ (originally called information gain of order $\alpha$):
\begin{equation}
D_\alpha^R[p:q] := \frac{1}{\alpha-1}\log_2\left( \int_\calX p(x)^\alpha q(x)^{1-\alpha}\dmu(x)\right),
\end{equation}
generalizing the Kullback-Leibler divergence
since $\lim_{\alpha\rightarrow 1} D_\alpha^R[p:q]=D_\KL[p:q]$.
R\'enyi $\alpha$-divergences are {\em non-decreasing} with respect to increasing $\alpha$~\cite{RenyiDiv-2014}:
$D_\alpha^R[p:q]\leq D_{\alpha'}^R[p:q]$ for $\alpha'\geq\alpha$.

Sibson\footnote{Robin Sibson (1944-2017) is also renown for inventing the natural neighbour interpolation~\cite{Sibson-1981}.}~\cite{Sibson-1969} considered both the R\'enyi $\alpha$-divergence~\cite{Renyi-1961} $D^R_\alpha$ 
and the R\'enyi $\alpha$-weighted mean $M^R_\alpha:=M_{g_\alpha}$
to define the {\em information radius} $R_\alpha$ of order $\alpha$ of a weighted set  $\calP=\{(w_i,p_i)\}_{i=1}^n$ of densities $p_i$'s as the following minimization problem:
\begin{equation}\label{eq:ireq}
R_\alpha(\calP) := \min_{c\in\calD} R_\alpha(\calP,c),
\end{equation}
where
\begin{equation}
R_\alpha(\calP,c) := M^R_\alpha\left(D^R_\alpha[p_1:c],\ldots, D^R_\alpha[p_n:c]; w_1,\ldots,w_n\right),
\end{equation}
and
\begin{eqnarray}\label{eq:renyiweight}
M^R_\alpha(x_1,\ldots,x_n;w_1,\ldots,w_n )&=& \frac{1}{\alpha-1}\log_2 \sum_{i=1}^n w_i 2^{(\alpha-1)x_i},\\
&=& \frac{1}{\alpha-1} \LSE\left((\alpha-1)x_1\log 2+\log w_1,\ldots, (\alpha-1)x_i\log 2+\log w_i\right).
\end{eqnarray}
Function $\LSE(a_1,\ldots,a_n):=\log\left(\sum_{i=1}^n e^{a_i}\right)$ denotes the log-sum-exp (convex) function~\cite{ConvexOptim-2004,KLLSE-2016}.

Notice that $2^{(\alpha-1)D_\alpha^R[p:q]}=\int_\calX p(x)^\alpha q(x)^{1-\alpha}\dmu(x)$, the {\em Bhattacharyya $\alpha$-coefficient}~\cite{BR-2011} (also called Chernoff $\alpha$-coefficient~\cite{Chernoff-2011,Chernoff-2013}):
\begin{equation}
C_{\Bhat,\alpha}[p:q]:=\int_\calX p(x)^\alpha q(x)^{1-\alpha}\dmu(x).
\end{equation}
Thus we have
\begin{equation}
R_\alpha(\calP,c)=\frac{1}{\alpha-1}\log_2 \left(\sum w_i C_{\Bhat,\alpha}[p_i:c]\right).
\end{equation}
The ordinary Bhattacharyya coefficient is obtained for $\alpha=\frac{1}{2}$: $C_{\Bhat}[p:q]:=\int_\calX \sqrt{p(x)} \sqrt{q(x)}\dmu(x)$.

Sibson~\cite{Sibson-1969} also considered the limit case $\alpha\rightarrow\infty$ when defining the information radius:
\begin{equation}
D_\infty^R[p:q]:=\log_2 \sup_{x\in\calX} \frac{p(x)}{q(x)}.
\end{equation}

Sibson reported the following theorem in his information radius study~\cite{Sibson-1969}:

\begin{Theorem}[Theorem 2.2 and Corollary 2.3 of~\cite{Sibson-1969}]\label{thm:Sibson}
The optimal density $c^*_\alpha=\arg\min_{c\in\calD} R_\alpha(\calP,c)$ is unique and we have:
$$
\begin{array}{lll}
c^*_1(x)=\sum_i w_ip_i(x), & R_1(\calP)=R_1(\calP,c_1^*)=\int_\calX \sum_i w_ip_i\log_2\frac{p_i}{\sum_j w_jp_j(x)}\dmu(x), &   \\
c^*_\alpha(x)=\frac{(\sum_i w_ip_i(x)^\alpha)^{\frac{1}{\alpha}} }{\int_\calX (\sum_i w_ip_i(x)^\alpha)^{\frac{1}{\alpha}}\dmu(x)}, &
R_\alpha(\calP)=R_\alpha(\calP,c_\alpha^*)=
\frac{1}{\alpha-1}\log_2 \left(\int_\calX (\sum_i w_i p_i(x)^\alpha)^{\frac{1}{\alpha}}\dmu(x)\right)^{\alpha},  
 & \\
 & \alpha\in(0,1)\cup(1,\infty)\\
c_\infty^*(x)= \frac{\max_i p_i(x)}{\int_\calX (\max_i p_i(x))\dmu(x)}, & R_\infty(\calP)=R_\infty(\calP,c_\infty^*)
=\log_2 \int_\calX \left(\max_i p_i(x)\right)\dmu(x),& 
\end{array}
$$
\end{Theorem}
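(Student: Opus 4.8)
The plan is to treat the three regimes separately but to derive the two finite-order cases ($\alpha=1$ and $\alpha\in(0,1)\cup(1,\infty)$) by the same calculus-of-variations recipe already used above for the ordinary Jensen-Shannon divergence, and to handle $\alpha=\infty$ by a direct minimax argument.

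First I would rewrite the objective as a single integral functional. Using $2^{(\alpha-1)D_\alpha^R[p:c]}=C_{\Bhat,\alpha}[p:c]=\int_\calX p^\alpha c^{1-\alpha}\dmu(x)$ and linearity, one has $R_\alpha(\calP,c)=\frac{1}{\alpha-1}\log_2\left(\int_\calX P(x)\,c(x)^{1-\alpha}\dmu(x)\right)$ where $P(x):=\sum_i w_i p_i(x)^\alpha$. Since $\log_2$ is increasing, minimizing $R_\alpha$ is equivalent to minimizing $\int_\calX P\,c^{1-\alpha}\dmu(x)$ when $\alpha>1$ and to maximizing it when $\alpha\in(0,1)$. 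In either case I would introduce the Lagrangian $L(c)=P\,c^{1-\alpha}+\lambda c$ for the constraint $\int_\calX c\,\dmu(x)=1$ and solve the Euler-Lagrange equation $\partial L/\partial c=(1-\alpha)P c^{-\alpha}+\lambda=0$. This gives $c\propto P^{1/\alpha}=\left(\sum_i w_i p_i^\alpha\right)^{1/\alpha}$; normalizing against the constraint yields exactly the stated $c_\alpha^*$, and substituting back and simplifying the logarithm (using $\int_\calX P\,(P^{1/\alpha}/Z)^{1-\alpha}\dmu(x)=Z^\alpha$ with $Z$ the normalizer) produces the claimed closed form for $R_\alpha(\calP)$. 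The $\alpha=1$ row I would obtain either as the limit $\alpha\to1$ or, more cleanly, by the same Lagrange computation applied directly to $R_1(\calP,c)=\sum_i w_i D_\KL[p_i:c]$, whose stationarity condition $-\frac{1}{c\ln2}\sum_i w_i p_i+\lambda=0$ forces $c\propto\sum_i w_i p_i$, i.e.\ the arithmetic mixture.

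For uniqueness in the finite-order cases, I would argue that the stationary point is in fact the global optimum by convexity, since the calculus of variations only certifies a critical point. The map $t\mapsto t^{1-\alpha}$ on $(0,\infty)$ is strictly convex when $\alpha>1$ (as $1-\alpha<0$) and strictly concave when $\alpha\in(0,1)$ (as $0<1-\alpha<1$); combined with the sign of $\frac{1}{\alpha-1}$, the functional $c\mapsto R_\alpha(\calP,c)$ is strictly convex over the convex set $\calD$ in both regimes, so the critical density is the unique minimizer. For $\alpha=1$ the same conclusion follows from strict convexity of $c\mapsto D_\KL[p_i:c]$.

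The case I expect to be the real obstacle, and which does not fit the smooth variational template, is $\alpha=\infty$. Here the R\'enyi $\infty$-mean is the maximum, so $R_\infty(\calP,c)=\max_i\log_2\sup_x\frac{p_i(x)}{c(x)}=\log_2\sup_x\frac{\max_i p_i(x)}{c(x)}$, a Chebyshev-type minimax rather than an Euler-Lagrange problem. Writing $M(x):=\max_i p_i(x)$ and $S(c):=\sup_x M(x)/c(x)$, I would use the pointwise bound $M(x)\le S(c)\,c(x)$ and integrate to get $S(c)\ge\int_\calX M\,\dmu(x)$, with equality \emph{iff} $M/c$ is constant $\mu$-almost everywhere, i.e.\ $c\propto M$. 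Normalizing pins down $c_\infty^*=M/\int_\calX M\,\dmu(x)$ as the unique minimizer and gives $R_\infty(\calP)=\log_2\int_\calX\max_i p_i\,\dmu(x)$. The remaining delicate points are the usual ones for this kind of variational argument: justifying differentiation under the integral and the interchange of $\max_i$ and $\sup_x$, and confirming the sign bookkeeping that makes the single stationary density a minimizer (not a maximizer) of $R_\alpha$ across the two finite regimes.
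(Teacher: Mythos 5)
Your proposal is correct, but it takes a genuinely different route from the paper. The paper does not run Euler--Lagrange on $R_\alpha(\calP,c)$ at all: it posits the candidate $c^*_\alpha$ and proves the exact compensation identity $R_\alpha(\calP,c)-R_\alpha(\calP,c^*_\alpha)=D_\alpha^R[c^*_\alpha:c]$ (Proposition~\ref{prop:decomp}), case by case for $\alpha\in(0,1)\cup(1,\infty)$, $\alpha=1$, and $\alpha=\infty$; optimality and uniqueness then both fall out of $D_\alpha^R\geq 0$ with equality iff $c=c^*_\alpha$, with no differentiation under the integral sign and no separate convexity argument. Your route instead \emph{derives} the candidate via the reduction $R_\alpha(\calP,c)=\frac{1}{\alpha-1}\log_2\int_\calX P(x)c(x)^{1-\alpha}\dmu(x)$ and a Lagrange stationarity computation, then certifies global optimality by convexity/concavity of $t\mapsto t^{1-\alpha}$, and handles $\alpha=\infty$ by the pointwise bound $M(x)\leq S(c)c(x)$ integrated against $\mu$ --- this last argument is arguably more self-contained than the paper's, and the whole scheme has the advantage of not requiring you to guess $c^*_\alpha$ in advance (it parallels the variational proof the paper gives for the ordinary JSD in the introduction). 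Two small points of care: your claim that $c\mapsto R_\alpha(\calP,c)$ is itself strictly convex is not needed and is doubtful for $\alpha>1$ (the logarithm of a strictly convex functional need not be convex); the monotone-transformation reduction you state first already gives uniqueness of the minimizer from strict convexity (resp.\ concavity) of the inner integral, so you should rest the uniqueness claim on that. And the equality case in your $\alpha=\infty$ bound should be phrased as $M=S(c)\,c$ $\mu$-a.e.\ (forced by integrating the nonnegative function $S(c)c-M$ to zero), which then yields $c\propto M$ after normalization.
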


Observe that $R_\infty(\calP)$ does not depend on the (positive) weights.

The proof follows from the following decomposition of the information radius:
\begin{Proposition}\label{prop:decomp}
We have:
\begin{equation}
R_\alpha(\calP,c) - R_\alpha(\calP,c^*_\alpha) = D_\alpha^R(c^*_\alpha,c)\geq 0.
\end{equation}
\end{Proposition}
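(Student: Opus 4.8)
The plan is to establish the identity by direct computation, reducing everything to the single aggregated function $\Phi_\alpha(x) := \sum_i w_i p_i(x)^\alpha$, and then to read off nonnegativity from the fact that $D_\alpha^R$ is itself a divergence. The starting point is the rewriting $R_\alpha(\calP,c) = \frac{1}{\alpha-1}\log_2\left(\sum_i w_i C_{\Bhat,\alpha}[p_i:c]\right)$ recorded just before the statement. Expanding each Bhattacharyya coefficient and interchanging the finite sum with the integral gives
$$\sum_i w_i C_{\Bhat,\alpha}[p_i:c] = \int_\calX \left(\sum_i w_i p_i(x)^\alpha\right) c(x)^{1-\alpha}\dmu(x) = \int_\calX \Phi_\alpha(x)\, c(x)^{1-\alpha}\dmu(x),$$
so that $R_\alpha(\calP,c) = \frac{1}{\alpha-1}\log_2 \int_\calX \Phi_\alpha(x) c(x)^{1-\alpha}\dmu(x)$ depends on $c$ only through this one integral.

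First I would record the normalizer of the optimal density. Writing $Z := \int_\calX \Phi_\alpha(x)^{1/\alpha}\dmu(x)$, the claimed optimum is $c^*_\alpha(x) = \Phi_\alpha(x)^{1/\alpha}/Z$, whence $c^*_\alpha(x)^\alpha = \Phi_\alpha(x)/Z^\alpha$. Substituting $c = c^*_\alpha$ into the formula above and simplifying (using $1+\frac{1-\alpha}{\alpha}=\frac{1}{\alpha}$) yields $R_\alpha(\calP) = R_\alpha(\calP, c^*_\alpha) = \frac{\alpha}{\alpha-1}\log_2 Z$, matching the value in Theorem~\ref{thm:Sibson}.

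Next I would compute $D_\alpha^R(c^*_\alpha, c)$ directly. By definition it equals $\frac{1}{\alpha-1}\log_2 \int_\calX c^*_\alpha(x)^\alpha c(x)^{1-\alpha}\dmu(x)$; replacing $c^*_\alpha(x)^\alpha$ by $\Phi_\alpha(x)/Z^\alpha$ factors out $Z^{-\alpha}$ and leaves exactly the integral appearing in $R_\alpha(\calP,c)$. Splitting the logarithm then gives
$$D_\alpha^R(c^*_\alpha, c) = \frac{1}{\alpha-1}\log_2 \int_\calX \Phi_\alpha(x) c(x)^{1-\alpha}\dmu(x) - \frac{\alpha}{\alpha-1}\log_2 Z = R_\alpha(\calP, c) - R_\alpha(\calP),$$
which is the desired identity. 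The inequality $D_\alpha^R(c^*_\alpha, c)\geq 0$ is then immediate because the R\'enyi $\alpha$-divergence is nonnegative: for $\alpha\in(0,1)$ the prefactor $\frac{1}{\alpha-1}$ is negative while $C_{\Bhat,\alpha}[c^*_\alpha:c]\leq 1$ by H\"older's inequality, and for $\alpha>1$ both the prefactor and $\log_2$ of the coefficient are nonnegative, with equality iff $c = c^*_\alpha$ $\mu$-almost everywhere.

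The computation is essentially mechanical, so the only genuine subtleties are bookkeeping ones. I would keep track of the sign of $\frac{1}{\alpha-1}$, noting that the algebraic identity itself is sign-independent and that only the nonnegativity claim splits into the two ranges $\alpha<1$ and $\alpha>1$. I would also treat the boundary orders separately: the case $\alpha=1$ requires replacing the Bhattacharyya computation by the corresponding KLD argument (as in the variational proof of the ordinary JSD given above, with $\Phi_1 = c^*_1 = \bar p$), while the case $\alpha=\infty$ requires reading $\Phi_\alpha^{1/\alpha}$ as $\max_i p_i$ and $D_\infty^R$ as the log-sup-ratio distance; both limits follow from the same factoring argument applied to the appropriate definitions. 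The one analytic point worth stating is the interchange of the finite sum and the integral, which is harmless here since the sum has finitely many terms and all integrands are nonnegative.
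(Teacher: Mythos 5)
Your proof is correct and follows essentially the same route as the paper's: you expand $R_\alpha(\calP,c)=\frac{1}{\alpha-1}\log_2\int_\calX\left(\sum_i w_i p_i(x)^\alpha\right)c(x)^{1-\alpha}\dmu(x)$, factor the normalizer out of $(c^*_\alpha)^\alpha$ to identify the difference as $D_\alpha^R(c^*_\alpha,c)$, and treat $\alpha=1$ and $\alpha=\infty$ separately by the KLD cross-entropy decomposition and the log-sup-ratio, exactly as in the paper. Your added justification of the nonnegativity of $D_\alpha^R$ (sign of $\frac{1}{\alpha-1}$ against the H\"older/Jensen bound on $C_{\Bhat,\alpha}$) is a welcome detail the paper leaves implicit.
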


Since the proof is omitted in~\cite{Sibson-1969}, we report it here:

\begin{proof}
Let $\Delta(c,c^*_\alpha):=R_\alpha(\calP,c) - R_\alpha(\calP,c^*_\alpha)$.
We handle the three cases depending on the $\alpha$ values:

\begin{itemize}
\item Case $\alpha\in(0,1)\cup(1,\infty)$:
Let $P_\alpha(\calP)(x):=\left(\sum_i w_i p_i(x)^\alpha\right)^{\frac{1}{\alpha}}$.
We have $(c_\alpha^*(x))^\alpha = \frac{\sum_i  w_i p_i(x)^\alpha}{\left(\int P_\alpha(\calP)(x)\dmu(x)\right)^\alpha}$.
We get
\begin{eqnarray}
\Delta(c,c^*_\alpha) &=& \frac{1}{\alpha-1}\log_2\left(\sum_i w_i\int p_i(x)^\alpha c(x)^{1-\alpha}\dmu(x)\right)
- \frac{1}{\alpha-1}\log_2\left( \int P_\alpha(\calP)(x)\dmu(x)\right)^\alpha,\\
&=&\frac{1}{\alpha-1}\log_2 \frac{\sum_i w_i\int p_i(x)^\alpha c(x)^{1-\alpha}\dmu}{(\int P_\alpha(\calP)(x)\dmu(x))^\alpha},\\
&=&\frac{1}{\alpha-1}\log_2 \frac{\int  (\sum_i w_ip_i(x)^\alpha) c(x)^{1-\alpha}}{(\int P_\alpha(\calP)(x)\dmu(x))^\alpha}\dmu(x),\\
&=&\frac{1}{\alpha-1}\log_2 \int (c_\alpha^*(x))^\alpha c(x)^{1-\alpha}\dmu(x),\\
&:=& D_\alpha^R(c^*_\alpha,c).
\end{eqnarray}

\item Case $\alpha=1$: We have $\Delta(c,c^*_1):=R_1(\calP,c) - R_1(\calP,c^*_1)$ with $c^*_1=\sum_i w_ip_i$.
Since $R_1(\calP,c)=\sum_i w_i D_\KL[p_i:c]$, we have 
\begin{eqnarray}
R_1(\calP,c) &=&\sum_i w_i h[p_i:c]- w_i h[p_i],\\
&=& h[\sum_i w_ip_i:c] - \sum_i w_i h[p_i],\\
&=&  h[c^*_1:c] - \sum_i w_i h[p_i].
\end{eqnarray}
It follows that
\begin{eqnarray}
\Delta(c,c^*_1)&=&h[c^*_1:c] - \sum_i w_i h[p_i] - \left(h[c^*_1:c_1^*] - \sum_i w_i h[p_i]\right),\\
&=& h^[c^*_1:c]-h[c^*_1],\\
&=& D_\KL[c^*_1:c]=D_1^R[c^*_1:c].
\end{eqnarray}

\item Case   $\alpha=\infty$: we have $c_\infty^*=\frac{\max_i p_i(x)}{\int (\max_i p_i(x))\dmu(x)}$,
  $R_\infty(\calP,c_\infty^*)=\log_2 \int (\max_i p_i(x))\dmu(x)$, and $D_\infty^R[p:q]=\log_2 \sup_x \frac{p(x)}{q(x)}$.
	We have $R_\infty(\calP,c)=\log_2 \sup_x \frac{p_i(x)}{c(x)}$
	Thus $\Delta(c,c^*_\alpha):=R_\infty(\calP,c) - R_\infty(\calP,c^*_\infty)=\log_2 \sup_x \frac{c_\infty^*(x)}{c(x)}=D_\infty^R[c^*_\infty:c]$.
	\end{itemize}
\end{proof}

It follows that
$$
\min_c R_\alpha(\calP,c)  = \min_c R_\alpha(\calP,c^*_\alpha)+D_\alpha^R(c^*_\alpha,c)\equiv \min_c D_\alpha^R(c^*_\alpha,c)\geq 0.
$$
Thus we have $c=c^*_\alpha$ since $D_\alpha^R(c^*_\alpha,c)$ is minimized for $c=c^*_\alpha$.

Notice that $c_\infty^*(x)= \frac{\max\{p_1(x),\ldots, p_n(x)\}}{\int_\calX (\max_i p_i(x))\dmu(x)}$ is the upper envelope of the densities $p_i(x)$'s normalized to be a density.
Provided that the densities $p_i$'s intersect pairwise in at most $s$ locations (i.e., $|\{p_i(x)\cap p_j(x)\}|\leq s$ for $i\not=j$), we can compute efficiently this upper envelope using an output-sensitive algorithm~\cite{Nielsen-1998} of computational geometry.

When the point set is $\calP=\left\{\left(\frac{1}{2},p\right),\left(\frac{1}{2},q\right)\right\}$ with $w_1=w_2=\frac{1}{2}$, the information radius defines a ($2$-point) symmetric distance as follows:

$$
\begin{array}{ll}
R_1(p,q)= \frac{1}{2} \int_\calX p(x)\log_2\frac{2p}{p(x)+q(x)}\dmu(x) + \frac{1}{2} \int_\calX q(x)\log_2\frac{2q(x)}{p(x)+q(x)}\dmu(x), &   \alpha=1 \\
R_\alpha(p,q)=\frac{\alpha}{\alpha-1}\log_2 \int_\calX \left(\frac{ p(x)^\alpha+ q(x)^\alpha}{2}\right)^{\frac{1}{\alpha}} \dmu(x) 
=\frac{\alpha}{\alpha-1}\log_2 \int_\calX P_\alpha(p(x),q(x)) \dmu(x),
 & \alpha\in(0,1)\cup(1,\infty)\\
R_\infty(p_1,p_2)=\log_2 \int_\calX \max\{p(x),q(x)\} \dmu(x),& \alpha=\infty.
\end{array}
$$

This family of symmetric divergences may be called the Sibson's $\alpha$-divergences, and the Jensen-Shannon divergence is interpreted
 as a limit case when $\alpha\rightarrow 1$.
Notice that since we have $\lim_{\alpha\rightarrow\infty} P_\alpha(p,q)=\max\{p,q\}$ and $\lim_{\alpha\rightarrow\infty}\frac{\alpha}{\alpha-1}=1$, we have $\lim_{\alpha\rightarrow\infty} R_\alpha(p,q)=R_\infty(p_1,p_2)$.
Notice that for $\alpha=1$, the integral and logarithm operations are swapped compared to $R_\alpha$ for $\alpha\in(0,1)\cup(1,\infty)$.

\begin{Theorem}\label{thm:cfirfracinteger}
When $\alpha=\frac{1}{k}$ for an integer $k\geq 2$, the Sibson $\alpha$-divergences between two densities $p_{\theta_1}$ and $p_{\theta_2}$ of an exponential family $\{p_\theta\ :\ \theta\in\Theta\}$ with cumulant function $F(\theta)$ is available in closed form:
$$
R_\alpha(p_{\theta_1},p_{\theta_2})=-\frac{1}{k-1}\log_2 \left( \frac{1}{2^k} \sum_{i=0}^k \binom{k}{i}  \exp\left(F\left(\frac{i}{k}\theta_1+\left(1-\frac{i}{k}\right)\theta_2\right)-\left(\frac{i}{k}F(\theta_1)+\left(1-\frac{i}{k}\right)F(\theta_2)\right)\right) \right).
$$
\end{Theorem}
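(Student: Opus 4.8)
The plan is to start from the $2$-point Sibson $\alpha$-divergence formula recorded just above the statement, specialize it at $\alpha=\frac{1}{k}$, and exploit the fact that a fractional exponent $\frac{1}{k}$ with $k$ an integer turns the power-mean integrand into a \emph{finite} binomial sum whose terms are ordinary (skewed) Bhattacharyya coefficients. Concretely, for $\alpha\in(0,1)$,
\[
R_\alpha(p_{\theta_1},p_{\theta_2})=\frac{\alpha}{\alpha-1}\log_2\int_\calX\left(\frac{p_{\theta_1}(x)^\alpha+p_{\theta_2}(x)^\alpha}{2}\right)^{\frac{1}{\alpha}}\dmu(x),
\]
so substituting $\alpha=\frac{1}{k}$ makes the prefactor $\frac{\alpha}{\alpha-1}=-\frac{1}{k-1}$ and the integrand $\left(\frac{p_{\theta_1}^{1/k}+p_{\theta_2}^{1/k}}{2}\right)^{k}$.

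First I would apply the binomial theorem: since $k$ is a positive integer,
\[
\left(\frac{p_{\theta_1}(x)^{1/k}+p_{\theta_2}(x)^{1/k}}{2}\right)^{k}=\frac{1}{2^k}\sum_{i=0}^k\binom{k}{i}\,p_{\theta_1}(x)^{\frac{i}{k}}\,p_{\theta_2}(x)^{\frac{k-i}{k}}.
\]
Integrating term by term over $\calX$ (the integrand is nonnegative and the sum is finite, so interchanging sum and integral is immediate) reduces the problem to evaluating the $k+1$ skewed Bhattacharyya coefficients $C_{\Bhat,\frac{i}{k}}[p_{\theta_1}:p_{\theta_2}]=\int_\calX p_{\theta_1}^{\frac{i}{k}}p_{\theta_2}^{1-\frac{i}{k}}\dmu(x)$ for $i=0,\ldots,k$, each with skew weight $\frac{i}{k}\in[0,1]$.

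Next I would invoke the closed form of the Bhattacharyya coefficient on an exponential family. Writing a canonical density as $p_\theta(x)=\exp(\langle\theta,t(x)\rangle-F(\theta)+\gamma(x))$ with sufficient statistic $t(x)$ and carrier term $\gamma(x)$, the product $p_{\theta_1}^{\beta}p_{\theta_2}^{1-\beta}$ equals $p_{\beta\theta_1+(1-\beta)\theta_2}(x)$ times the scalar $\exp\left(F(\beta\theta_1+(1-\beta)\theta_2)-(\beta F(\theta_1)+(1-\beta)F(\theta_2))\right)$, because the affine combination of natural parameters remains in the family and $\gamma(x)$ is invariant under the convex combination of exponents. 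Since the remaining density integrates to $1$,
\[
C_{\Bhat,\beta}[p_{\theta_1}:p_{\theta_2}]=\exp\left(F(\beta\theta_1+(1-\beta)\theta_2)-(\beta F(\theta_1)+(1-\beta)F(\theta_2))\right).
\]
Setting $\beta=\frac{i}{k}$ and substituting into the binomial sum reproduces exactly the integral inside the logarithm of the claimed identity; multiplying by the prefactor $-\frac{1}{k-1}$ then completes the derivation.

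There is no deep obstacle here: the argument is a direct computation, and the only care needed is bookkeeping — verifying that the endpoint terms $i=0$ and $i=k$ reduce to $\int_\calX p_{\theta_2}(x)\dmu(x)=\int_\calX p_{\theta_1}(x)\dmu(x)=1$, and keeping the base-$2$ logarithm and the sign of the prefactor consistent throughout. The one conceptual point worth flagging is \emph{why} the hypothesis $\alpha=\frac{1}{k}$ with $k$ integer is essential: only then does the expansion of the power mean terminate after finitely many elementary terms. For a generic $\alpha\in(0,1)$ the generalized binomial series would not terminate, leaving an infinite sum of fractional-power integrals with no elementary closed form.
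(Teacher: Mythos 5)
Your proposal is correct and follows essentially the same route as the paper: substitute $\alpha=\tfrac{1}{k}$ into the two-point information radius, expand the power-mean integrand with the binomial theorem, and evaluate each resulting skewed Bhattacharyya coefficient in closed form on the exponential family (the paper cites Theorem~3 of~\cite{BR-2011} for this step, which you rederive inline, and likewise notes that $\tfrac{i}{k}\theta_1+(1-\tfrac{i}{k})\theta_2\in\Theta$ guarantees finiteness). Your remark on why integrality of $k$ is essential is a nice addition but does not change the argument.
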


\begin{proof}
Let $p=p_{\theta_1}$ and $q=p_{\theta_2}$ be two densities of an exponential family~\cite{EF-2014} with  cumulant function $F(\theta)$ and natural  parameter space $\Theta$. 
Without loss of generality, we may consider a natural exponential family~\cite{EF-2014} with densities written canonically as $p_\theta(x)=\exp(x^\top\theta-F(\theta))$ for $\theta\in\Theta$.
It can be shown that the cumulant function $F(\theta)=\log \int_\calX \exp(x^\top\theta) \dmu(x)$ is strictly convex and analytic on the open convex natural parameter space $\Theta$~\cite{EF-2014}.

When $\alpha=\frac{1}{2}$ (i.e., $k=2$), we have:
\begin{eqnarray}
R_{\frac{1}{2}}(p,q) &=& -\log_2 \int_\calX \left(\frac{ \sqrt{p(x)}+ \sqrt{q(x)}}{2}\right)^{2} \dmu(x),\\ 
&=& -\log_2 \left(\frac{1}{2}+ \frac{1}{2}\int_\calX \sqrt{p(x)}\sqrt{q(x)}\dmu(x)\right),\\
&=& -\log_2 \left(\frac{1}{2}+ \frac{1}{2} C_{\Bhat}[p:q]\right) \geq 0,
\end{eqnarray}
where $C_{\Bhat}[p:q]:=\int_\calX \sqrt{p(x)}\sqrt{q(x)}\dmu(x)$  is the Bhattacharyya coefficient (with $0 \leq C_{\Bhat}[p:q]\leq 1$). Using Theorem~3 of~\cite{BR-2011}, we have
$$
C_{\Bhat}[p_{\theta_1},p_{\theta_2}]=\exp\left( F\left(\frac{\theta_p+\theta_q}{2}\right) - \frac{F(\theta_p)+F(\theta_q)}{2}\right),
$$
so that we get the following closed-form formula:
$$
R_{\frac{1}{2}}(p_{\theta_1},p_{\theta_2}) = -\log_2 \left(\frac{1}{2}+ \frac{1}{2} \exp\left( F\left(\frac{\theta_p+\theta_q}{2}\right) - \frac{F(\theta_p)+F(\theta_q)}{2}\right) \right) \geq 0,
$$

Assume now that $k=\frac{1}{\alpha}\geq 2$ is an arbitrary integer, and let us apply the binomial expansion for $P_\alpha(p_{\theta_1},p_{\theta_2})$ in the spirit of~\cite{fdivchiorder-2013,MinkowskiDiv-2019}:

\begin{eqnarray}
\int_\calX P_\alpha(p(x),q(x)) \dmu(x) &=& \int_\calX   \left( \frac{ p_{\theta_1}(x)^{\frac{1}{k}}+ p_{\theta_2}(x)^{\frac{1}{k}}}{2}\right)^{k} \dmu(x),\\
&=& \frac{1}{2^k} \sum_{i=0}^k \binom{k}{i} \int_\calX  \left( p_{\theta_1}(x)^{\frac{1}{k}} \right)^i
 \left( p_{\theta_2}(x)^{\frac{1}{k}}\right)^{k-i} \dmu(x).
\end{eqnarray}

Let $I_{k,i}(\theta_1,\theta_2):=\int_\calX  \left(p_{\theta_1}(x)^{\frac{1}{k}}\right)^i \left(p_{\theta_2}(x)^{\frac{1}{k}}\right)^{k-i} \dmu(x)$.
Since $\frac{i}{k}\theta_1+\frac{k-i}{k}\theta_2=\theta_2+\frac{i}{k}(\theta_1-\theta_2)\in\Theta$ for $i\in\{0,\ldots, k\}$,  we get by following the calculation steps in~\cite{BR-2011}:
$$
I_{k,i}(\theta_1,\theta_2):=\exp\left(F\left(\frac{i}{k}\theta_1+\left(1-\frac{i}{k}\right)\theta_2\right)-\left(\frac{i}{k}F(\theta_1)+\left(1-\frac{i}{k}\right)F(\theta_2)\right)\right)<\infty.
$$
Notice that $I_{2,1}=C_{\Bhat}[p_{\theta_1},p_{\theta_2}]$, and $I_{k,0}=I_{k,k}=1$.

Thus we get the following closed-form formula:
\begin{eqnarray}
R_\alpha(p_{\theta_1},p_{\theta_2})&=&-\frac{1}{k-1}\log_2 \left( \frac{1}{2^k} \sum_{i=0}^k \binom{k}{i}  \exp\left(F\left(\frac{i}{k}\theta_1+\left(1-\frac{i}{k}\right)\theta_2\right)-\left(\frac{i}{k}F(\theta_1)+\left(1-\frac{i}{k}\right)F(\theta_2)\right)\right) \right).
\end{eqnarray}
\end{proof}

This closed-form formula applies in particular to the family $\{\calN(\mu,\Sigma)\}$ of (multivariate) normal distributions:
In this case, the natural parameters $\theta$ are expressed using both a {\em vector parameter} component $v$ and a {\em matrix parameter} component $M$:
\begin{equation}
\theta=(v,M)=\left(\Sigma^{-1}m,-\frac{1}{2}\Sigma^{-1}\right),
\end{equation}
and the cumulant function is:
\begin{equation}
F_{\calN}(\theta)= \frac{d}{2}\log\pi -\frac{1}{2}\log |-2M|-\frac{1}{4} v^\top M^{-1} v,
\end{equation}
where $|\cdot|$ denotes the matrix determinant.

In general, the optimal density $c^*_\alpha=\arg\min_{c\in\calD} R_\alpha(\calP,c)$ yielding the information radius $R_\alpha(\calP)$ can be 
interpreted as a {\em generalized centroid} (extending the notion of Fr\'echet means~\cite{Frechet-1948}) with respect to $(M^R_\alpha,D_\alpha^R)$, where a {\em $(M,D)$-centroid} is defined by:

\begin{Definition}[$(M,D)$-centroid]\label{def:centroid}
Let $\calP=\{(w_1,p_1),\ldots,(w_n,p_n)\}$ be a normalized weighted point set, $M$ a mean, and $D$ a distance.
Then the $(M,D)$-centroid is defined as $c_{M,D}(\calP)=\arg\min_c M(D(p_1:c),\ldots,D(p_n:c);w_1,\ldots,w_n)$.
\end{Definition}

When all the densities $p_i$'s belong to a same exponential family~\cite{EF-2014} with cumulant function $F$ (i.e., $p_i=p_{\theta_i}$), we have 
$D_\KL[p_\theta:p_{\theta_i}]=B_F(\theta_i:\theta)$ where $B_F$ denotes the Bregman divergence~\cite{BregmanKmeans-2005}:
\begin{equation}
B_F(\theta:\theta') := F(\theta)-F(\theta')-(\theta-\theta')^\top \nabla F(\theta').
\end{equation}
Let $\calV=\{(w_1,\theta_1),\ldots,(w_n,\theta_n)\}$ be the parameter set corresponding to $\calP$. 
Define 
\begin{equation}
R_F(\calV,\theta) := \sum_{i=1}^n w_i B_F(\theta_i:\theta).
\end{equation}
Then we have the equivalent decomposition of Proposition~\ref{prop:decomp}:
\begin{equation}
R_F(\calV,\theta)-R_F(\calV,\theta^*) = B_F(\theta^*:\theta),
\end{equation}
with $\theta^*=\bar{\theta}:=\sum_{i=1}^n w_i\theta_i$.
(This decomposition is used to prove Proposition~1 of~\cite{BregmanKmeans-2005}.)
The quantity $R_F(\calV)=R_F(\calV,\theta^*)$ was termed the {\em Bregman information}~\cite{BregmanKmeans-2005,SBD-2009}.
 $R_F(\calV)$ could also be called {\em Bregman information radius} according to Sibson.
Since $R_F(\calV)=\sum_{i=1}^n w_i D_\KL[p_{\bar{\theta}}:p_{\theta_i}]$, we can interpret the Bregman information as a Sibson's information radius for densities of an exponential family with respect to the arithmetic mean $M_1^R=A$ and the {\em reverse Kullback-Leibler divergence}: $D_\KL^*[p:q]:=D_\KL[q:p]$.
This observation yields us to the JS-symmetrization of distances based on generalized information radii in \S~\ref{sec:irgen}.

Sibson proved that the information radii of any order are all upper bounded (Theorem 2.8 and Theorem 2.9 of~\cite{Sibson-1969}) as follows:  
\begin{eqnarray}
R_1(\calP) &\leq& \sum_i w_i \log_2  \frac{1}{w_j}\leq \log_2 n<\infty,\label{eq:sub1}\\
R_\alpha(\calP) &\leq&  \frac{\alpha}{\alpha-1} \log_2 \left(\sum_i w_i^{\frac{1}{\alpha}}\right)  \leq \log_2 n<\infty, \quad \alpha\in(0,1)\cup(1,\infty)\label{eq:sub2}\\
R_\infty(\calP)  &\leq&  \log_2 n<\infty.\label{eq:sub3}
\end{eqnarray}

We interpret Sibson's upper bounds of Eq.~\ref{eq:sub1}-Eq.~\ref{eq:sub3} as follows:

\begin{Proposition}[Information radius upper bound]\label{prop:IRUB}
The information radius of order $\alpha$ of a weighted set of distributions is upper bounded by the discrete R\'enyi entropy of order $\frac{1}{\alpha}$ of the weight distribution: $R_\alpha(\calP)\leq H_{\frac{1}{\alpha}}^R[w]$ where $H_\alpha^R[w] := \frac{1}{1-\alpha}\log\left( \sum_i w_i^\alpha \right)$.
\end{Proposition}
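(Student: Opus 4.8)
The quickest route is to read the bound off Sibson's three upper bounds directly: Eq.~\ref{eq:sub2} already states $R_\alpha(\calP)\le\frac{\alpha}{\alpha-1}\log_2\bigl(\sum_i w_i^{1/\alpha}\bigr)$ for $\alpha\in(0,1)\cup(1,\infty)$, and the algebraic identity $\frac{\alpha}{\alpha-1}=\frac{1}{1-1/\alpha}$ turns the right-hand side into exactly $H_{1/\alpha}^R[w]$. The boundary cases Eq.~\ref{eq:sub1} and Eq.~\ref{eq:sub3} then supply the limiting orders: $\alpha=1$ gives $\sum_i w_i\log_2\frac{1}{w_i}$, which is the order-$1$ (Shannon) R\'enyi entropy $H_1^R[w]$, and $\alpha=\infty$ gives $\log_2 n$, which is the order-$0$ (Hartley) entropy $H_0^R[w]$. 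For completeness I would re-derive the three bounds from the closed forms of Theorem~\ref{thm:Sibson} rather than merely cite them.

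For the main range $\alpha\in(0,1)\cup(1,\infty)$, I would bound the integrand $\bigl(\sum_i w_i p_i(x)^\alpha\bigr)^{1/\alpha}$ by regarding it as the $\ell^\alpha$ (quasi-)norm of the vector $\bigl(w_i^{1/\alpha}p_i(x)\bigr)_i$, using $w_i p_i^\alpha=\bigl(w_i^{1/\alpha}p_i\bigr)^\alpha$. Monotonicity of $\ell^p$-norms in $p$ (the power-mean inequality) compares this with the $\ell^1$-norm $\sum_i w_i^{1/\alpha}p_i(x)$: for $\alpha>1$ one has the forward inequality $\|\cdot\|_\alpha\le\|\cdot\|_1$, while for $0<\alpha<1$ it reverses to $\|\cdot\|_\alpha\ge\|\cdot\|_1$. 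Integrating over $\calX$ and using $\int_\calX p_i\,\dmu=1$ yields $\int_\calX\bigl(\sum_i w_i p_i^\alpha\bigr)^{1/\alpha}\dmu\le\sum_i w_i^{1/\alpha}$ for $\alpha>1$ and $\ge\sum_i w_i^{1/\alpha}$ for $\alpha<1$. Applying $\frac{\alpha}{\alpha-1}\log_2(\cdot)$ then gives $R_\alpha(\calP)\le\frac{\alpha}{\alpha-1}\log_2\bigl(\sum_i w_i^{1/\alpha}\bigr)=H_{1/\alpha}^R[w]$ in both subcases, since the sign of the prefactor $\frac{\alpha}{\alpha-1}$ flips in lockstep with the direction of the norm inequality.

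The two boundary cases reduce to elementary pointwise estimates already recorded in the introduction. For $\alpha=1$, the bound $\frac{p_i(x)}{\bar p(x)}\le\frac{1}{w_i}$ (immediate from $\bar p\ge w_i p_i$) gives $R_1(\calP)=\sum_i w_i\int_\calX p_i\log_2\frac{p_i}{\bar p}\,\dmu\le\sum_i w_i\log_2\frac{1}{w_i}=H_1^R[w]$. For $\alpha=\infty$, the bound $\max_i p_i(x)\le\sum_i p_i(x)$ together with $\int_\calX\sum_i p_i\,\dmu=n$ gives $R_\infty(\calP)=\log_2\int_\calX\max_i p_i\,\dmu\le\log_2 n=H_0^R[w]$.

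I expect the only genuine subtlety to be the sign bookkeeping in the $\alpha<1$ regime: there an upper bound on $R_\alpha$ forces a \emph{lower} bound on the integral, so one must invoke the reverse norm inequality $\|\cdot\|_\alpha\ge\|\cdot\|_1$ rather than the familiar forward one, and then verify that multiplying by the negative prefactor $\frac{\alpha}{\alpha-1}$ restores the correct direction. A secondary point to check is that the R\'enyi orders match at the endpoints, namely that $1/\alpha=1$ recovers Shannon entropy and $1/\alpha=0$ recovers Hartley entropy $\log_2 n$ (the number of strictly positive weights); both are standard limits of $H_\beta^R[w]$. Beyond these, the argument reduces to the two identities $\int_\calX p_i\,\dmu=1$ and $\int_\calX\sum_i p_i\,\dmu=n$, so no further computation is needed.
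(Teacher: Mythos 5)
Your proposal is correct, and its core coincides with what the paper actually does: the paper offers no proof of Proposition~\ref{prop:IRUB} beyond citing Sibson's bounds (Eq.~\ref{eq:sub1}--Eq.~\ref{eq:sub3}, Theorems 2.8--2.9 of Sibson) and observing the algebraic identity $\frac{\alpha}{\alpha-1}=\frac{1}{1-1/\alpha}$ that rewrites the right-hand side of Eq.~\ref{eq:sub2} as $H^R_{1/\alpha}[w]$, which is exactly your first paragraph. Your additional re-derivation of the three bounds is sound and goes beyond the paper: the $\ell^\alpha$-versus-$\ell^1$ comparison of the vector $\bigl(w_i^{1/\alpha}p_i(x)\bigr)_i$ with the sign of $\frac{\alpha}{\alpha-1}$ flipping in step with the direction of the norm inequality handles both subcases of $\alpha\in(0,1)\cup(1,\infty)$ correctly, and the endpoint arguments ($\frac{p_i}{\bar p}\le\frac{1}{w_i}$ for $\alpha=1$; $\max_i p_i\le\sum_i p_i$ for $\alpha=\infty$) match the quantities in Theorem~\ref{thm:Sibson}. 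One terminological nit: what you invoke is the monotonicity of unnormalized $\ell^p$ sums ($\|x\|_q\le\|x\|_p$ for $p\le q$), not the power-mean inequality for normalized means, which runs in the opposite direction; your stated directions are the correct ones, so this is only a naming issue.
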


\section{JS-symmetrization of distances based on generalized information radius}\label{sec:irgen}

Let us give the following definitions generalizing the information radius (i.e., Jensen-Shannon symmetrization of the distance when $|\calP|=2$) and the ordinary Jensen-Shannon divergence:

\begin{Definition}[$(M,D)$-information radius]\label{def:genIR}
Let $M$ be a weighted mean and $D$ a  distance. 
Then the  generalized information radius for a weighted set of points (e.g., vectors or densities) $(w_1,p_1),\ldots, (w_n,p_n)$ is:
$$
R_{M,D}(\calP) := \min_{c\in\calD} M\left(D[p_1:c],\ldots, D[p_n:c]; w_1,\ldots, w_n\right).
$$
\end{Definition}

We also define the $(M,D)$-centroid as follows:

\begin{Definition}[$(M,D)$-centroid]\label{def:gencentroid}
Let $M$ be a weighted mean and $D$ a statistical distance. 
Then the centroid for a weighted set of densities $(w_1,p_1),\ldots, (w_n,p_n)$ with respect to $(M,D)$ is:
$$
c_{M,D}(\calP) := \arg\min_{c\in\calD} M\left(D[p_1:c],\ldots, D[p_n:c]; w_1,\ldots, w_n\right).
$$
\end{Definition}
When $M=A$, we recover the notion of Fr\'echet mean~\cite{Frechet-1948}.
Notice that although the minimum $R_{M,D}(\calP)$ is unique, there may potentially exists several generalized centroids $c_{M,D}(\calP)$ depending on 
$(M,D)$.

The generalized information radius can be interepreted as a diversity index or an $n$-point distance.
When $n=2$, we get the following ($2$-point) distances which are considered as a generalization of the Jensen-Shannon divergence or Jensen-Shannon symmetrization:

\begin{Definition}[$M$-vJS symmetrization of $D$]\label{def:JSsym}
Let $M$ be a mean and $D$ a statistical distance. 
Then the variational  Jensen-Shannon symmetrization of $D$ is defined by the formula of a generalized information radius:
$$
D^\vJS_{M}[p:q] := \min_{c\in\calD} M\left(D[p:c], D[q:c]\right).
$$
\end{Definition}

We use the acronym $\vJS$ to distinguish it with the JS-symmetrization reported in~\cite{JSsym-2019}:
$$
D^\JS_{M}[p:q] = D^\JS_{M,A}[p:q] := \frac{1}{2}\left(D\left[p:(pq)^M_{\frac{1}{2}}\right] + D\left[q:(pq)^M_{\frac{1}{2}}\right]\right).
$$

We recover Sibson's information radius   $R_\alpha[p:q]$ induced by two densities $p$ and $q$ from Definition~\ref{def:JSsym} as 
the {\em $M_\alpha^R$-vJS symmetrization of the R\'enyi divergence $D_\alpha^R$}. 
We have ${B_F}^\vJS_A$ which is the Bregman information~\cite{BregmanKmeans-2005}.
Notice that we may skew these generalized JSDs by taking weighted mean $M_\beta$ instead of $M$ for $\beta\in(0,1)$ yielding to the general definition: 

\begin{Definition}[Skew $M_\beta$-vJS symmetrization of $D$]\label{def:skewJSsym}
Let $M_\beta$ be a weighted mean and $D$ a statistical distance. 
Then the variational  skewed Jensen-Shannon symmetrization of $D$ is defined by the formula of a generalized information radius:
$$
\boxed{D^\vJS_{M_\beta}[p:q] := \min_{c\in\calD} M_\beta\left(D[p:c], D[q:c]\right)}
$$
\end{Definition}

Notice that this definition is implicit and can be made explicit when the centroid $c^*(p,q)$ is unique:
\begin{equation}
D^\vJS_{M_\beta}[p:q] =M_\beta\left(D[p:c^*(p,q)], D[q:c^*(p,q)]\right.
\end{equation}

In particular, when $D=D_\KL$, the KLD, we obtain generalized skewed Jensen-Shannon divergences:

\begin{Definition}[Skewed $M_\beta$-vJS divergence]\label{def:genJSD}
Let $M_\beta$ be a  weighted mean for $\beta\in(0,1)$. 
Then the $M$-vJS divergence is defined by the variational formula:
$$
D_\vJS^{M_\beta}[p:q] := \min_{c\in\calD} M_\beta\left(D_\KL[p:c], D_\KL[q:c]\right).
$$
\end{Definition}

Amari~\cite{Amari-2007} obtained the {\em $(A,D_\alpha)$-information radius} and its corresponding unique centroid  for $D_\alpha$, the $\alpha$-divergence of information geometry~\cite{IG-2016}.

Brekelmans et al.~\cite{LikelihoodRatioEF-2020} studied the geometric path $(p_1p_2)^G_\beta(x) \propto p_1^{1-\beta}(x)p_2^\beta(x)$ between two distributions $p_1$ and $p_2$ of $\calD$ where $G_\beta(a,b)=a^{1-\beta}b^\beta$ (with $a,b>0$) is the weighted geometric mean.
They proved the variational formula:
\begin{equation}
(p_1p_2)^G_\beta = \min_{c\in\calD} (1-\beta)D_\KL[c:p_1]+\beta D_\KL[c:p_2].
\end{equation}
That is, $(p_1p_2)^G_\beta$ is a $G_\beta$-$D_\KL^*$ centroid, where $D_\KL^*$ is the reverse KLD.
The corresponding $(G_\beta,D_\KL^*)$-vJSD is studied is~\cite{JSsym-2019} and used in deep learning in~\cite{VIGJSD-2020}.

It is interesting to study the link between $(M_\beta,D)$-variational Jensen-Shannon symmetrization of $D$ and the 
$(M_\alpha',N_\beta')$-JS symmetrization of of~\cite{JSsym-2019}. In particular the link between $M_\beta$ for averaging in the minimization and $M_\alpha'$ the mean for generating abstract mixtures.

More generally, Brekelmans et al.~\cite{AISqpath-2020} considered the $\alpha$-divergences  extended to positive measures  (i.e., a separable divergence built as the different between a weighted arithmetic mean and a geometric mean~\cite{nielsen2020generalization}):
\begin{equation}
D_\alpha^e[p:q] := \frac{4}{1-\alpha^2} \int_\calX \left(\frac{1-\alpha}{2} p(x)+\frac{1+\alpha}{2} q(x)
- p^{{\frac{1-\alpha}{2}}}(x)q^{{\frac{1+\alpha}{2}}}(x)
\right) \dmu(x)
\end{equation}
and proved that
\begin{equation}
c^*_\beta = \arg\min_{c\in\calD} \{(1-\beta)D_\alpha^e[p_1:c]+\beta D_\alpha^e[p_2:c]\}
\end{equation}
is a density of a {\em likelihood ratio $q$-exponential family}: $c^*_\beta= \frac{p_1(x)}{Z_{\beta,q}} \exp_q(\beta \log_q \frac{p_2(x)}{p_1(x)})$ for $q=\frac{1+\alpha}{2}$.
That is, $c^*_\beta$ is the  $(A_\beta,D_\alpha^e)$- generalized centroid, and the corresponding information radius is the variational JS symmetrization:
\begin{equation}
{D_\alpha^e}^\vJS[p_1:p_2]=(1-\beta)D_\alpha^e[p_1:c^*_\beta]+\beta D_\alpha^e[p_2:c^*_\beta]
\end{equation}
The $q$-divergence~\cite{AmariOhara-2011} $D_q$ between two densities of a $q$-exponential family amounts to a Bregman divergence~\cite{AmariOhara-2011,IG-2016}.
Thus $D_q^\vJS$ for $M=A$ is a generalized information radius which amounts to a Bregman information.

For the case $\alpha=\infty$ in Sibson's information radius, we find that the information radius is related to the total variation:
\begin{Proposition}[Lemma 2.4~\cite{Sibson-1969}]:
\begin{equation}
D^{\vJS,R}_\infty[p:q]=\log_2 (1+D_\TV[p:q]),
\end{equation}
where $D_\TV$ denotes the   total variation 
\begin{equation}
D_\TV[p:q]=\frac{1}{2}\int_\calX |p(x)-q(x)| \dmu(x).
\end{equation}
\end{Proposition}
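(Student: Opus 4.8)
The plan is to reduce the claim to the elementary pointwise identity $\max\{a,b\} = \frac{1}{2}(a+b) + \frac{1}{2}|a-b|$ after identifying the left-hand side with a quantity already evaluated in Sibson's theorem. First I would observe that, by Definition~\ref{def:JSsym}, the symbol $D^{\vJS,R}_\infty[p:q]$ denotes the $M_\infty^R$-vJS symmetrization of the R\'enyi divergence $D_\infty^R$ applied to the two-point set $\calP=\{(\frac{1}{2},p),(\frac{1}{2},q)\}$, which is exactly Sibson's information radius $R_\infty(p,q)$. By the $\alpha=\infty$ row of Theorem~\ref{thm:Sibson}, this admits the closed form
\[
D^{\vJS,R}_\infty[p:q] \;=\; R_\infty(p,q) \;=\; \log_2\int_\calX \max\{p(x),q(x)\}\,\dmu(x),
\]
with the weights having already dropped out, consistent with the earlier remark that $R_\infty(\calP)$ is independent of the positive weights.

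Next I would rewrite the integrand using the identity above, obtaining
\[
\int_\calX \max\{p(x),q(x)\}\,\dmu(x) \;=\; \int_\calX \frac{p(x)+q(x)}{2}\,\dmu(x) \;+\; \frac{1}{2}\int_\calX |p(x)-q(x)|\,\dmu(x).
\]
The splitting is legitimate because $\max\{p,q\}\le p+q$ is $\mu$-integrable, so each piece is finite. The first integral evaluates to $\frac{1}{2}(1+1)=1$ since $p$ and $q$ are densities, while the second is precisely $D_\TV[p:q]$ by definition. Substituting back yields $\int_\calX \max\{p,q\}\,\dmu = 1 + D_\TV[p:q]$, and applying $\log_2$ gives the stated formula.

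I expect no genuine obstacle here: once the left-hand side is identified with $R_\infty(p,q)$ through Theorem~\ref{thm:Sibson}, the remainder is a single algebraic manipulation. The only points deserving a word of care are (i) confirming the definitional identification $D^{\vJS,R}_\infty = R_\infty$, so that the closed form is applicable, and (ii) noting the integrability bound $\max\{p,q\}\le p+q$ that justifies separating the integral; both are immediate.
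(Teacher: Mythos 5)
Your proof is correct and follows essentially the same route as the paper's: identify $D^{\vJS,R}_\infty[p:q]$ with $R_\infty(p,q)=\log_2\int_\calX \max\{p(x),q(x)\}\,\dmu(x)$ via the $\alpha=\infty$ case of Theorem~\ref{thm:Sibson}, then apply the pointwise identity $\max\{a,b\}=\frac{a+b}{2}+\frac{|a-b|}{2}$ to get $1+D_\TV[p:q]$ inside the logarithm. The extra remarks on integrability are harmless additions the paper leaves implicit.
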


\begin{proof}
Since $\max\{p(x),q(x)\}=\frac{p(x)+q(x)}{2}+\frac{1}{2}|q(x)-p(x)|$, it follows that $\int_\calX \max\{p(x),q(x)\}\dmu(x)=1+D_\TV[p:q]$.
From Theorem~\ref{thm:Sibson}, we have $R_\infty(\{(\frac{1}{2},p),(\frac{1}{2},q))=\log_2 \int_\calX \max\{p(x),q(x)\}\dmu(x)$ and
therefore $R_\infty(\{(\frac{1}{2},p),(\frac{1}{2},q))=\log_2\left(1+D_\TV[p:q]\right)$.
\end{proof}

\begin{Remark}
Consider the metric topology $\tau$ induced by the total variation distance $D_\TV$.
Let $B_\alpha(c,r):=\{ p\ :\ R_\alpha(c,p)<r\}$ denote the open ball centered at $c$ and of radius $r$ with respect to the information radius $R_\alpha$.
Then the set of all open balls $\{B_\alpha(c,r)\ :\ c\in\calD, \epsilon>0\}$ form a {\em basis} of $\tau$, see~\cite{Csiszar-1967}. 
\end{Remark}

Notice that when $M=M_g$ is a quasi-arithmetic mean, we may consider the divergence $D_g[p:q]=g^{-1}(D[p:q))$ so that the centroid of the
$(M_g,D_g)$-JS symmetrization is:
\begin{equation}
\arg\min_{c} g^{-1}\left( \sum_{i=1}^n w_i D[p_i:c] \right) \equiv \arg\min_{c}  \sum_{i=1}^n w_i D[p_i:c]. 
\end{equation}

The generalized $\alpha$-skewed Bhattacharyya divergence~\cite{GenBhat-2014} can also be considered with respect to a weighted mean $M_\alpha$:
$$
D_{\Bhat,M_\alpha}[p:q]=-\log\int_\calX M_\alpha(p(x),q(x)) \dmu(x).
$$
In particular, when $M_\alpha$ is a quasi-arithmetic weighted mean induced by a strictly continuous and monotone function $g$, we have
$$
D_{\Bhat,g,\alpha}[p:q]:=-\log\int_\calX M_g(p(x),q(x);\alpha) \dmu(x)=:D_{\Bhat,({M_g})_\alpha}[p:q].
$$
Since $\min\{p(x),q(x)\}\leq M_g(p(x),q(x);\alpha)\leq \min\{p(x),q(x)\}$, $\min\{a,b\}=\frac{a+b}{2}-\frac{|b-a|}{2}$ and $\max\{a,b\}=\frac{a+b}{2}+\frac{|b-a|}{2}$, we deduce that we have:
\begin{equation}
 0\leq \leq 1-D_\TV[p,q] \int_\calX M_g(p(x),q(x);\alpha) \dmu(x)\leq 1+D_\TV[p,q]\leq 2.
\end{equation}

The information radius of Sibson for $\alpha\in(0,1)\cup(1,\infty)$ may be interpreted as generalized scaled $\alpha$-skewed Bhattacharyya divergences with respect to the power means $P_\alpha$ since we have
$R_\alpha(p,q)=\frac{\alpha}{\alpha-1} \log_2 \int_\calX P_\alpha(p(x),q(x);\alpha) \dmu(x)=\frac{\alpha}{1-\alpha}D_{\Bhat,P_\alpha}[p:q]$.

\section{Relative information radius and relative Jensen-Shannon divergences}\label{sec:iref}

\subsection{Relative information radius}\label{sec:relrad}

In this section, instead of considering the full space of densities $\calD$ on $(\calX,\calF,\mu)$ for performing the variational optimization of the information radius, 
we  rather consider a subfamily of (parametric) densities $\calR\subset\calD$.
Then we define accordingly the {\em $\calR$-relative Jensen-Shannon divergence} ($\calR$-JSD for short) as

\begin{equation}
D_\vJS^\calR[p:q] := \min_{c\in\calR} D_\KL[p:c]+D_\KL[q:c].
\end{equation}

In particular, Sibson~\cite{Sibson-1969} considered the {\em normal information radius} 
with $\calR=\{\calN(\mu,\Sigma)\ :\ (\mu,\Sigma)\in \bbR^d\times\bbP_{++}^d\}$, 
where $\bbP_{++}^d$ denotes the cone of $d\times d$ positive-definite matrices (positive-definite covariance matrices of Gaussian distributions).
More generally, we may consider any exponential family $\calE$~\cite{EF-2014}.

\begin{Definition}[Relative $(\calR,M)$-JS symmetrization of $D$]\label{def:relJSsym}
Let $M$ be a mean and $D$ a statistical distance. Then
$$
D^\vJS_{M,\calR}[p:q] := \min_{c\in\calR} M\left(D[p:c], D[q:c]\right).
$$
\end{Definition}
We obtain relative Jensen-Shannon divergences when $D=D_\KL$.

Grosse et al.~\cite{Grosse-2013} considered geometric and moment average paths for annealing.
They proved that when $p_1=p_{\theta_1}$ and $p_2=p_{\theta_2}$ belong to an exponential family~\cite{EF-2014} $\calE_F$ with
cumulant function $F$, we have
\begin{equation}\label{eq:GA}
(p_1p_2)_\beta^G = \frac{p_1(x)^{1-\beta}p_2(x)^\beta}{\int p_1(x)^{1-\beta}p_2(x)^\beta\dmu(x)} 
= \arg\min_{c\in \calE_F} (1-\beta)D_\KL[c:p_1]+\beta D_\KL[c:p_2]
\end{equation}
and
\begin{equation}\label{eq:MA}
p_{\bar\eta}=\arg\min_{c\in \calE_F} (1-\beta)D_\KL[p_1:c]+\beta D_\KL[c:p_2],
\end{equation}
where $\bar\eta=(1-\beta)\eta_1+\beta\eta_2$, $\eta_i=E_{p_{\theta_i}}[t(x)]$ (this is not an arithmetic mixture but an exponential family density which moment parameter which is a mixture of the parameters). 

The corresponding minima can be interpreted as relative skewed Jensen-Shannon symmetrization for the reverse KLD $D_\KL^*$ (Eq.~\ref{eq:GA}) and the relative skewed Jensen-Shannon divergence (Eq.~\ref{eq:MA}):
\begin{eqnarray}
{D_\KL^*}^\vJS_{A_\beta,\calE_F}[p_1:p_2] &=&  \min_{c\in \calE_F} (1-\beta)D_\KL^*[p_1:c]+\beta D_\KL^*[p_2:c],\\
D^\vJS_{A_\beta,\calE_F}[p_1:p_2] &=& \min_{c\in \calE_F} (1-\beta)D_\KL[c:p_1]+\beta D_\KL[c:p_2],
\end{eqnarray}
where $A_\beta(a,b):=(1-\beta)a+\beta b$ is the weighted arithmetic mean for $\beta\in(0,1)$.

Notice that when $p=q$, we have $D^\vJS_{M,\calR}[p:p]=\min_{c\in\calR} D[p:c]$ which is the {\em information projection}~\cite{InfProj-2018} with respect to $D$ of density $q$ to the submanifold $\calR$. Thus when $p\not\in\calR$, we have $D^\vJS_{M,\calR}[p:p]>0$, i.e., the relative JSDs are not proper divergences since a proper divergence ensures that $D[p:q]\geq 0$ with equality iff $p=q$.
Figure~\ref{fig:relJS} illustrates the main cases of the relative Jensen-Shannon divergenc between $p$ and $q$:
Either $p$ and $q$ are both inside or outside $\calR$, or one point is inside $\calR$ while the other point is outside $\calR$.
When $p=q$, we get an information projection when both points are outside $\calR$, and 
$D_\vJS^\calR[p:p]=0$ when $p\in\calR$. When $p,q\in\calR$ with $p\not=q$, the value $D_\vJS^\calR[p:q]$ corresponds to the information radius (and the arg min to the right-sided Kullback-Leibler centroid).

\begin{figure}
\centering
\includegraphics[width=\textwidth]{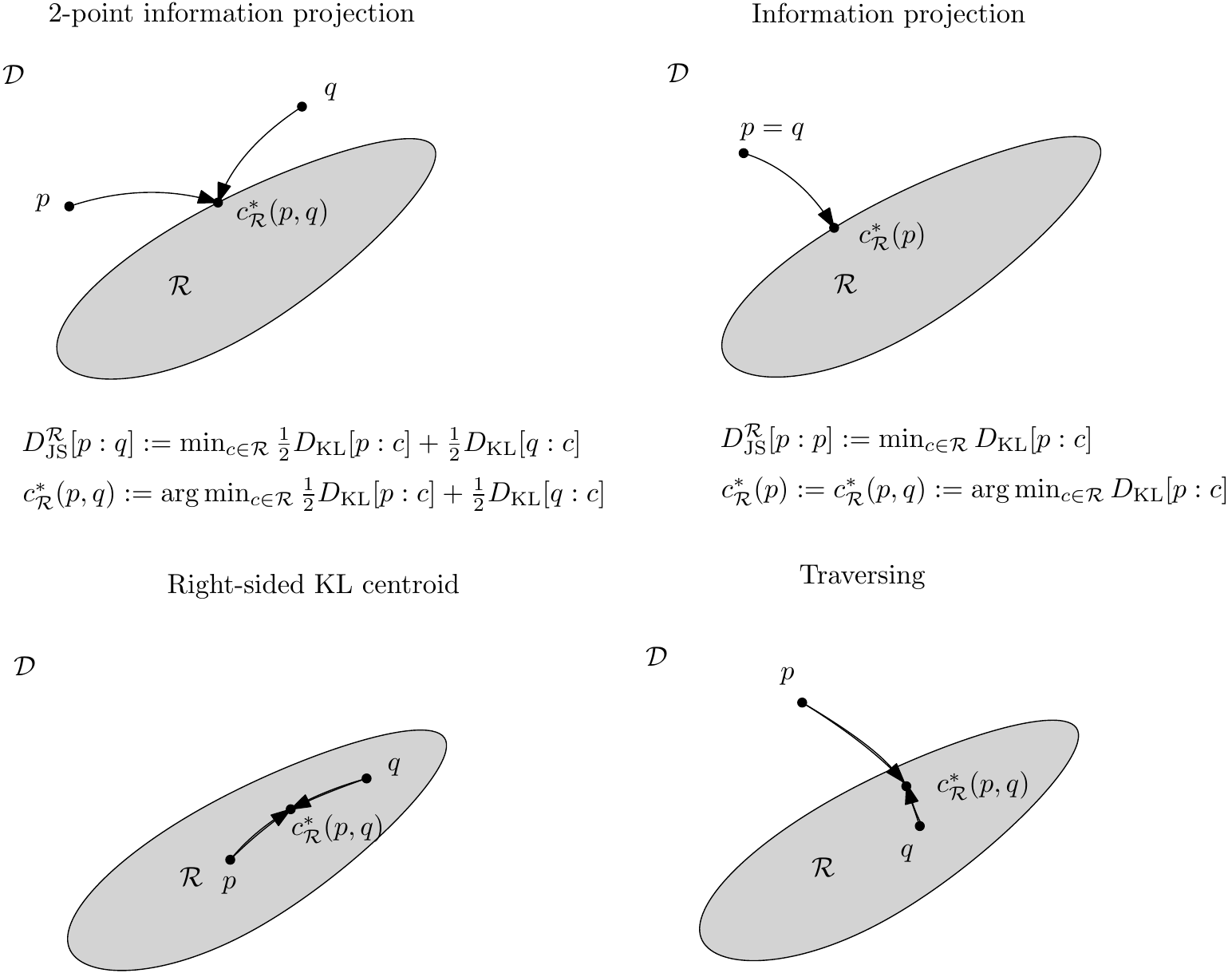}

\caption{Illustrating several cases of the relative Jensen-Shannon divergence based on whether $p\in\calR$ and $q\in\calR$ or not.\label{fig:relJS}}
\end{figure}

\subsection{Relative Jensen-Shannon divergences: Applications to density clustering and quantization}
\label{sec:relJSD}\label{sec:clustering}

Let $D_\KL[p:q_\theta]$ be the Kullback-Leibler divergence between an {\em arbitrary} density $p$ 
and a density $q_\theta$ of an exponential family $\calQ=\{q_\theta\ :\ \theta\in\Theta\}$. 
Let us express canonically~\cite{EF-2009,EF-2014} the density $q_\theta(x)$ as 
$$
q_\theta(x)=\exp(\theta^\top t_\calQ(x)-F_\calQ(\theta)+k_\calQ(x)),
$$ 
where $t_\calQ(x)$ denotes the sufficient statistics, $k_\calQ(x)$ is an auxiliary carrier measure term (e.g., $k(x)=0$ for the Gaussian family 
and $k(x)=\log(x)$ for the Rayleigh family~\cite{EF-2009}),  and 
$F_\calQ(\theta)$ the cumulant function.
Assume we both know in closed-form the following quantities:
\begin{itemize}
\item $m_p:=E_p[t_\calQ(x)]=\int p(x)t_\calQ(x) \dmu(x)$ and 
\item the Shannon entropy $h[p]=-\int p(x)\log p(x) \dmu(x)$ of $p$.
\end{itemize}
Then we can express the KLD using a {\em semi-closed-form formula}.

\begin{Proposition}\label{eq:kldmef}
Let $q_\theta\in\calQ$ be a density of an exponential family 
and $p$ an arbitrary density with $m_p=E_p[t_\calQ(x)]$.
Then the Kullback-Leibler divergence between $p$ and $q_\theta$ is expressed  as: 
\begin{equation}\label{eq:scfef}
D_\KL[p:q_\theta] = F_\calQ(\theta)-m_p^\top\theta-E_p[k_\calQ(x)]-h[p],
\end{equation}
where $h[p:q_\theta]=F_\calQ(\theta)-m_p^\top\theta-E_p[k_\calQ(x)]$ is the cross-entropy between $p$ and $q_\theta$.
\end{Proposition}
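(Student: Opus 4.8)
The plan is to expand the Kullback-Leibler divergence using its decomposition $D_\KL[p:q_\theta]=h[p:q_\theta]-h[p]$ recorded earlier in the excerpt, so that the only real task is to compute the cross-entropy $h[p:q_\theta]=-\int_\calX p(x)\log q_\theta(x)\,\dmu(x)$ in terms of the stated closed-form quantities.

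First I would substitute the canonical expression $\log q_\theta(x)=\theta^\top t_\calQ(x)-F_\calQ(\theta)+k_\calQ(x)$ into the cross-entropy integral and use linearity to split it into three pieces:
\[
h[p:q_\theta] = -\theta^\top\!\int_\calX p(x)\,t_\calQ(x)\,\dmu(x) + F_\calQ(\theta)\int_\calX p(x)\,\dmu(x) - \int_\calX p(x)\,k_\calQ(x)\,\dmu(x).
\]
Then I would identify each term: the first integral equals $E_p[t_\calQ(x)]=m_p$, contributing $-m_p^\top\theta$; the second uses the normalization $\int_\calX p(x)\,\dmu(x)=1$, so that the constant $F_\calQ(\theta)$ survives unchanged; and the third is exactly $E_p[k_\calQ(x)]$. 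This yields $h[p:q_\theta]=F_\calQ(\theta)-m_p^\top\theta-E_p[k_\calQ(x)]$, matching the claimed cross-entropy. Subtracting $h[p]$ then produces the announced formula for $D_\KL[p:q_\theta]$.

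The computation is entirely routine, so I do not expect any genuine analytical obstacle; the only point requiring care is integrability. Because $p$ is an arbitrary density (not necessarily belonging to $\calQ$), one must assume that $m_p=E_p[t_\calQ(x)]$, $E_p[k_\calQ(x)]$, and $h[p]$ all exist and are finite — precisely the standing hypotheses of the proposition — in order to legitimize the termwise splitting of the integral and to guarantee that the cross-entropy and entropy are well defined. Under these assumptions the \emph{semi-closed-form} terminology is justified, since the expression is fully explicit in $\theta$ and $F_\calQ$ once the three $p$-dependent scalars $m_p$, $E_p[k_\calQ(x)]$, and $h[p]$ are supplied.
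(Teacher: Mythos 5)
Your proposal is correct and follows exactly the paper's route: expand $D_\KL[p:q_\theta]=h[p:q_\theta]-h[p]$, substitute $\log q_\theta(x)=\theta^\top t_\calQ(x)-F_\calQ(\theta)+k_\calQ(x)$ into the cross-entropy integral, and identify the three terms. The only addition is your explicit remark on integrability of $m_p$, $E_p[k_\calQ(x)]$, and $h[p]$, which the paper leaves implicit but is a reasonable point of care.
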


\begin{proof}
The proof is straightforward since $\log q_\theta(x)=\theta^\top t_\calQ(x)-F_\calQ(\theta)+k_\calQ(x)$.
Therefore, we have:
\begin{eqnarray}
D_\KL[p:q_\theta] &=& h[p:q_\theta]-h[p],\\
&=&-\int_\calX p(x)\log q_\theta(x) \dmu(x) - h[p],\\
&=& F_\calQ(\theta)-m_p^\top\theta-E_p[k_\calQ(x)]-h[p]. 
\end{eqnarray}
\end{proof}

\begin{Example}
For example, when $q_\theta=q_{\mu,\Sigma}$ is the density of a Gaussian distribution $\mathcal{N}(\mu,\Sigma)$ (with $k_\calN(x)=0$), we have
\begin{equation}
D_\KL[p:q_{\mu,\Sigma}]=\frac{1}{2}\left(\log |2\pi\Sigma|+
(\mu-m)^\top \Sigma^{-1}(\mu-m) + \tr(\Sigma^{-1}S) \right)-h[p],
\end{equation}
where $m=\mu(p)=E_p[x]$ and  $S=\mathrm{Cov}(p):=E_p\left[XX^\top\right]-E_p[X]E_p[X]^\top$.  
\end{Example}

The formula of Proposition~\ref{eq:kldmef} is said in semi-closed-form because it relies on knowing both the entropy $h$ of $p$ and the  sufficient statistic moments $E_p[t_\calQ(x)]$. 
Yet semi-closed formula may prove useful in practice:
For example, we can answer the comparison predicate ``$D_\KL[p:q_{\theta_1}]\geq D_\KL[p:q_{\theta_2}]$ or not?'' by checking whether
$F_\calQ(\theta_1)-F_\calQ(\theta_2)-m_p^\top(\theta_1-\theta_2)\geq 0$ or not (i.e., the terms $-E_p[k_\calQ(x)]-h[p]$ cancel out).
This is a closed-form predicate although $D_\KL$ is known only in semi-closed-form.
This KLD comparison predicate shall be used later on when clustering densities with respect to centroids in~\S\ref{sec:clustering}. 

\begin{Remark}
Note that when $Y=f(X)$ for an invertible and differentiable transformation $f$ then we have $h[Y]=h[X]+E_X[\log |J_f(X)|]$ where $J_f$ denotes the Jacobian matrix.
For example, when $Y=f(X)=AX$, we have $h[Y]=h[X]+\log |A|$.
\end{Remark}

When $p$ belongs to an exponential family $\calP$  ($\calP$ may be different from $\calQ$) with cumulant function $F_\calP$, sufficient statistics $t_\calP(x)$, 
auxiliary carrier term $k_\calP(x)$ and natural parameter $\theta$, 
  we have the entropy~\cite{crossentropyEF-2010} expressed as follows: 
\begin{eqnarray}
h[p] &=& F_\calP(\theta)-\theta^\top\nabla F_\calP(\theta)-E_p[k_\calP(x)],\\
&=& -F^*_\calP(\eta)-E_p[k_\calP(x)],
\end{eqnarray}
where $F^*_\calP(\eta)$ is the Legendre transform of $F(\theta)$ and $\eta=\eta(\theta)=\nabla F(\theta)$ is called 
the moment parameter since $\eta(\theta)=E_p[t_\calP(x)]$~\cite{EF-2009,EF-2014}.

It follows the following proposition refining Proposition~\ref{eq:kldmef} when $p=p_\theta\in\calP$:

\begin{Proposition}\label{eq:klddiffef}
Let $p_{\theta}$ be a density of an exponential family $\calP$ and
 $q_{\theta'}$ be a density of an exponential family $\calQ$. 
Then the Kullback-Leibler divergence between $p_{\theta}$ and $q_{\theta'}$ is expressed  as: 
\begin{equation}\label{eq:htimesk}
D_\KL[p_{\theta}:q_{\theta'}] = 
F_\calQ(\theta')+F_\calP^*(\eta)-E_{p_{\theta}}[t_\calQ(x)]^\top\theta' +E_{p_{\theta}}[k_\calP(x)-k_\calQ(x)].
\end{equation}
\end{Proposition}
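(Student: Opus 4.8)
The plan is to obtain the formula by specializing Proposition~\ref{eq:kldmef} to the case where the first argument is itself a density $p_\theta$ of the exponential family $\calP$, and then to eliminate the Shannon entropy term $h[p_\theta]$ using the closed-form expression for the entropy of an exponential-family density recalled just above the statement. In other words, the result is nothing more than a substitution of two formulas already established, so I expect the work to be purely algebraic bookkeeping.

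First I would apply Proposition~\ref{eq:kldmef} with $p=p_\theta$ and $q_{\theta'}\in\calQ$. Writing $m_{p_\theta}=E_{p_\theta}[t_\calQ(x)]$ for the $\calQ$-sufficient-statistic moments taken under $p_\theta$, this immediately yields
\begin{equation}
D_\KL[p_\theta:q_{\theta'}] = F_\calQ(\theta') - E_{p_\theta}[t_\calQ(x)]^\top\theta' - E_{p_\theta}[k_\calQ(x)] - h[p_\theta].
\end{equation}
The only remaining step is to substitute the entropy identity $h[p_\theta]=-F_\calP^*(\eta)-E_{p_\theta}[k_\calP(x)]$, which holds because $p_\theta\in\calP$ with dual (moment) parameter $\eta=\nabla F_\calP(\theta)=E_{p_\theta}[t_\calP(x)]$. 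Replacing $-h[p_\theta]$ by $F_\calP^*(\eta)+E_{p_\theta}[k_\calP(x)]$ and collecting the two carrier-measure expectations into the single term $E_{p_\theta}[k_\calP(x)-k_\calQ(x)]$ produces exactly Eq.~\ref{eq:htimesk}.

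There is essentially no genuine obstacle here, since the proposition merely refines the preceding one by making the entropy explicit for $p\in\calP$. The only points that require care are the implicit regularity assumptions behind the two inputs: one must assume that $p_\theta$ has finite differential entropy and finite cross-entropy $h[p_\theta:q_{\theta'}]$, equivalently that the moment $E_{p_\theta}[t_\calQ(x)]$ and the carrier expectations $E_{p_\theta}[k_\calP(x)]$ and $E_{p_\theta}[k_\calQ(x)]$ all exist. Under these standard integrability conditions the substitution is justified termwise, and no further analysis is needed.
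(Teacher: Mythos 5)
Your proof is correct and follows essentially the same route as the paper: the paper likewise writes $D_\KL[p_\theta:q_{\theta'}]=h[p_\theta:q_{\theta'}]-h[p_\theta]$, substitutes the cross-entropy expression from Proposition~\ref{eq:kldmef} and the entropy identity $h[p_\theta]=-F_\calP^*(\eta)-E_{p_\theta}[k_\calP(x)]$, and collects the carrier terms. Your added remark on the integrability assumptions is a reasonable (and slightly more careful) gloss, but it does not change the argument.
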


\begin{proof}
We have 
\begin{eqnarray}
D_\KL[p_{\theta}:q_{\theta'}] &=& h[p_{\theta}:q_{\theta'}]-h[p_{\theta}],\\
&=&  F_\calQ(\theta')-m_{p_\theta}^\top\theta'-E_{p_{\theta}}[k_\calQ(x)] + F^*_\calP(\eta)+E_{p_{\theta}}[k_\calP(x)],\\
&=& F_\calQ(\theta') + F^*_\calP(\eta)- E_{p_{\theta}}[t_\calQ(x)]^\top \theta' +E_{p_{\theta}}[k_\calP(x)-k_\calQ(x)].
\end{eqnarray}
\end{proof}

In particular, when $p$ and $q$ belong both to the same exponential family (i.e., $\calP=\calQ$ with $k_\calP(x)=k_\calQ(x)$), we have $F(\theta):=F_\calP(\theta):=F_\calQ(\theta)$ and $
E_{p_{\theta}}[t_\calQ(x)]=\nabla F(\theta)=:\eta$, and 
$$
D_\KL[p_{\theta}:q_{\theta'}]=F(\theta')+F^*(\eta)-\theta'^\top\eta.
$$
This last equation is the Legendre-Fenchel divergence in Bregman manifolds~\cite{BregmanManifold-2021} (called dually flat spaces in information geometry~\cite{IG-2016}).
The divergence can thus be rewritten as equivalent dual Bregman divergences:
\begin{eqnarray}
D_\KL[p_{\theta}:q_{\theta'}] &=& F(\theta')-\eta^\top\theta'+F^*(\eta),\\
&=& B_F(\theta':\theta),\\
&=& B_{F^*}(\eta:\eta'),
\end{eqnarray}
where $\eta'=\nabla F(\theta')$.

\begin{Example}\label{ex:Weibull}
Let us use the formula of Eq.~\ref{eq:htimesk} to  calculate the KLD between two Weibull distributions~\cite{KLDWeibull-2013}.
A Weibull distribution of {\em shape} $\kappa>0$ and {\em scale} $\sigma>0$ has density defined on $\calX=[0,\infty)$ as follows:
$$
p^\Wei_{\kappa,\sigma}(x) := \frac{\kappa}{\sigma} \left(\frac{x}{\sigma}\right)^{\kappa-1} 
\exp\left(-\left(\frac{x}{\sigma}\right)^\kappa\right).
$$

For a fixed shape $\kappa$, the set of Weibull distributions $\{p^\Wei_{\kappa,\sigma}\ :\ \sigma>0\}$ form an exponential family with natural parameter $\theta=-\frac{1}{\sigma^\kappa}$, sufficient statistic $t_\kappa(x)=x^\kappa$, 
auxiliary carrier term $k_\kappa(x)=(\kappa-1)\log x+\log \kappa$, and cumulant function $F_\kappa(\theta)=-\log(-\theta)$ (so that 
$F_\kappa(\theta(\sigma))=F_\kappa(\sigma)=\kappa\log\sigma$):

$$
p^\Wei_{\kappa,\sigma}(x):=\exp\left(-\frac{1}{\sigma^\kappa} x^k +\log\frac{1}{\sigma^\kappa}+k(x)\right).
$$

We recover the exponential family of exponential distributions of rate parameter $\lambda=\frac{1}{\sigma}$ when $\kappa=1$:
\begin{eqnarray*}
p^\Exp_\lambda(x)&=&p^\Wei_{1,\sigma}(x)=\frac{1}{\sigma}\exp\left(-\frac{x}{\sigma}\right),\\
&=& \lambda\exp\left(-\lambda x\right),
\end{eqnarray*}
 and the exponential family of Rayleigh distributions when $\kappa=2$ with scale parameter $\sigma_\Ray=\frac{\sigma}{\sqrt{2}}$:
\begin{eqnarray*}
p^\Ray_{\sigma_\Ray}(x)&=&p^\Wei_{2,\sigma}(x)=\frac{2x}{\sigma^2}\exp\left(-\frac{x^2}{\sigma^2}\right),\\
&=&\frac{x}{\sigma_\Ray^2}\exp\left(-\frac{x^2}{2\sigma_\Ray^2}\right).
\end{eqnarray*}

Now, assume that we are given the differential entropy of the Weibull distributions~\cite{diffentropy-2013} (pp. 155-156):
$$
h\left[p^\Wei_{\kappa_1,\sigma_1}\right]=\gamma\left(1-\frac{1}{\kappa_1}\right)+\log\frac{\sigma_1}{\kappa_1}+1,
$$
where $\gamma\approx 0.5772156649$ is the Euler–Mascheroni constant, and the Weibull raw moments~\cite{diffentropy-2013} (p. 155):
$$
m=E_{p^\Wei_{\kappa_1,\sigma}}\left[x^{\kappa_2}\right] = \sigma_1^{\kappa_2} \Gamma\left(1+\frac{\kappa_2}{\kappa_1}\right),
$$
where $\Gamma(x)=\int_0^\infty t^{x-1} e^{-t} \mathrm{d}t$ is the gamma factorial function.
Since $h[p^\Wei_{\kappa,\sigma}]=F_\kappa(\theta)-\theta^\top\nabla F_\kappa(\theta)-E_{p^\Wei_{\kappa,\sigma}}[k_\kappa(x)]
=-F^*_\kappa(\eta)-E_{p^\Wei_{\kappa,\sigma}}[k_\kappa(x)]$, we deduce that
$$
E_{p^\Wei_{\kappa,\sigma}}[k_\kappa(x)]=-F_\kappa^*(\eta)-h\left[p^\Wei_{\kappa,\sigma}\right],
$$
where $F^*_\kappa(\eta)$ is the Legendre transform of $F_\kappa(\theta)$ and $\eta(\theta)=\nabla F_\kappa(\theta)=-\frac{1}{\theta}=E[t(x)]=E[x^\kappa]$.
We have $\theta(\eta)=\nabla F^*_\kappa(\eta)=-\frac{1}{\eta}$ and $F^*_\kappa(\eta)=\eta^\top\nabla F^*_\kappa(\eta)-F_\kappa(\nabla F^*_\kappa(\eta))=-1-\log\eta$.
It follows that 
$$
E_{p^\Wei_{\kappa,\sigma}}[k_\kappa(x)]=1+\log\left(\sigma\Gamma\left(1+\frac{1}{\kappa}\right)\right)-\gamma\left(1-\frac{1}{\kappa}\right)-\log\frac{\sigma}{\kappa}+1.
$$
Therefore, we deduce that the logarithmic moment of $p^\Wei_{\kappa_1,\sigma}$ is:
$$
E_{p^\Wei_{\kappa_1,\sigma}}[\log x]=-\frac{\gamma}{\kappa_1}+\log\sigma_1.
$$
This coincides with the explicit definite integral calculation reported in~~\cite{KLDWeibull-2013}.

Then we calculate the KLD between two Weibull distributions using Eq.~\ref{eq:htimesk} as follows:

\begin{eqnarray}
D_\KL\left[p^\Wei_{\kappa_1,\sigma_1}:p^\Wei_{\kappa_2,\sigma_2}\right] &=& 
F_{\kappa_2}(\theta') + F^*_{\kappa_1}(\eta)- E_{p_{\kappa_1,\sigma_1}}[x^{\kappa_2}]^\top \theta' +E_{p_{\kappa_1,\sigma_1}}[k_{\kappa_1}(x)-k_{\kappa_2}(x)]
\\
&=& \log \frac{\kappa_{1}}{\sigma_{1}^{\kappa_{1}}}-\log \frac{\kappa_{2}}{\sigma_{2}^{\kappa_{2}}}+
\left(\kappa_{1}-\kappa_{2}\right)\left[\log \sigma_{1}-\frac{\gamma}{\kappa_{1}}\right]+\left(\frac{\sigma_{1}}{\sigma_{2}}\right)^{\kappa_{2}} \Gamma\left(\frac{\kappa_{2}}{\kappa_{1}}+1\right)-1,
\end{eqnarray}
since we have the following terms:
\begin{eqnarray*}
F_{\kappa_2}(\theta') &=&  \log \sigma_2^{\kappa_2},\\
F^*_{\kappa_1}(\eta) &=&  -1-\log \sigma_1^{\kappa_1},\\
- E_{p_{\kappa_1,\sigma_1}}[x^{\kappa_2}]^\top \theta' &=& \frac{1}{\sigma_2^{\kappa_2}} \sigma_1^{\kappa_2}\Gamma\left(1+\frac{\kappa_2}{\kappa_1}\right)\\
E_{p_{\kappa_1,\sigma_1}}[k_{\kappa_1}(x)-k_{\kappa_2}(x)] &=& (\kappa_1-\kappa_2)E_{p_{\kappa_1,\sigma_1}}[\log x]+\log\frac{\kappa_1}{\kappa_2},\\
&=& \log\frac{\kappa_1}{\kappa_2} +  (\kappa_1-\kappa_2)\left(\log\sigma_1-\frac{\gamma}{\kappa_1}\right).
\end{eqnarray*}

This formula matches the formula reported in~\cite{KLDWeibull-2013}.

When $\kappa_1=\kappa_2=1$, we recover the ordinary KLD formula between two exponential distributions~\cite{EF-2009} with $\lambda_i=\frac{1}{\sigma_i}$ since $\Gamma(2)=1$:
\begin{eqnarray}
D_\KL\left[p^\Wei_{1,\sigma_1}:p^\Wei_{1,\sigma_2}\right] &=&  \log\frac{\sigma_2}{\sigma_1}+ \frac{\sigma_1}{\sigma_2}-1,\\
&=& \frac{\lambda_2}{\lambda_1}-\log\frac{\lambda_2}{\lambda_1}-1.
\end{eqnarray}

When $\kappa_1=\kappa_2=2$, we recover the ordinary KLD formula between two Rayleigh distributions~\cite{EF-2009} with 
$\sigma_\Ray=\frac{\sigma}{\sqrt{2}}$:
 \begin{eqnarray}
D_\KL\left[p^\Wei_{2,\sigma_1}:p^\Wei_{2,\sigma_2}\right] &=&  \log\left(\frac{\sigma_2^2}{\sigma_1^2}\right)+ \frac{\sigma_1^2}{\sigma_2^2}-1,\\
&=& \log\left(\frac{{\sigma_\Ray}_2^2}{{\sigma_\Ray}_1^2}\right)+ \frac{{\sigma_\Ray}_1^2}{{\sigma_\Ray}_2^2}-1.
\end{eqnarray}

\end{Example}

To find the best density $q_\theta$ approximating $p$ by minimizing $\min_{\theta} D_\KL[p:q_\theta]$, we solve $\nabla F(\theta)=\eta=m$, and therefore
$\theta=\nabla F^*(m)=(\nabla F)^{-1}(m)$ where $F^*(\eta)=E_{q_\eta}[\log q_\eta(m)]$ with $F^*$ denoting the Legendre-Fenchel convex conjugate~\cite{EF-2014}.
In particular, when $p=\sum w_i p_{\theta_i}$ is a mixture of EFs (with $m=E_p[t(x)]=\sum w_i\eta_i$ with $\eta_i=E_{p_{\theta_i}}[t(x)]$ thanks to the linearity of the expectation), then the best density of the EF simplifying $p$ is

\begin{eqnarray}
\min_\theta D_\KL[p:q_\theta] &=& \min_\theta F(\theta)-m^\top\theta,\\
&=& \min_\theta F(\theta)-\sum w_i\eta_i^\top\theta.
\end{eqnarray}

Taking the gradient with respect to $\theta$, we have $\nabla F(\theta)=\eta=\sum w_i\eta_i$.
This yields another proof without the Pythagoras theorem~\cite{Pelletier-2005,LearningMixtureKde-2013}.

\begin{Proposition}
Let $m(x)=\sum w_i p_{\theta_i}(x)$ be a mixture with components belonging to an exponential family with cumulant function $F$.
Then $\theta^*=\arg_\theta \min_{\theta} D_\KL[p:q_\theta]$ is $\nabla F^*(\sum_{i=1}^n w_i\eta_i)$ where the $\eta_i=\nabla F(\theta_i)$ are the moment parameters of the mixture components.
\end{Proposition}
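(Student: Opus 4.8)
The plan is to reduce the variational problem to a finite-dimensional convex minimization over the natural parameter $\theta$ by invoking the semi-closed-form expression for the KLD from Proposition~\ref{eq:kldmef}, and then to close it off with first-order optimality and Legendre duality. Writing $m=\sum_{i=1}^n w_i p_{\theta_i}$ for the mixture and taking $q_\theta$ in the \emph{same} exponential family (so $t_\calQ=t$, $F_\calQ=F$, $k_\calQ=k$), Proposition~\ref{eq:kldmef} gives
\[
D_\KL[m:q_\theta] = F(\theta) - m_m^\top\theta - E_m[k(x)] - h[m],
\]
where $m_m:=E_m[t(x)]$ is the sufficient-statistic moment of the mixture. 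The key structural observation is that only the first two terms depend on $\theta$.

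First I would compute $m_m$ using linearity of expectation: since $m=\sum_i w_i p_{\theta_i}$, we have $m_m=\sum_i w_i\,E_{p_{\theta_i}}[t(x)]=\sum_i w_i\eta_i$ with $\eta_i=\nabla F(\theta_i)$, i.e.\ $m_m$ is the $w$-weighted arithmetic mean of the component moment parameters. The two trailing terms $-E_m[k(x)]-h[m]$ are constants in $\theta$, so minimizing $D_\KL[m:q_\theta]$ is equivalent to minimizing the function $\phi(\theta):=F(\theta)-\big(\sum_i w_i\eta_i\big)^\top\theta$ over the open natural parameter space $\Theta$. Because $F$ is strictly convex (it is the cumulant of an exponential family), $\phi$ is strictly convex, so setting $\nabla\phi(\theta^*)=0$ yields the first-order condition $\nabla F(\theta^*)=\sum_i w_i\eta_i$, and this critical point is the \emph{unique} global minimizer. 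Inverting $\nabla F$ via its Legendre conjugate gives $\theta^*=\nabla F^*\big(\sum_i w_i\eta_i\big)$, as claimed.

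The main obstacle is not the computation but a well-posedness check: I must verify that $\sum_i w_i\eta_i$ lies in the open moment parameter space, so that $\nabla F^*$ is applicable and the interior critical point $\theta^*$ genuinely exists in $\Theta$. This follows from the convexity of the moment space together with the fact that each $\eta_i=\nabla F(\theta_i)$ already belongs to it, whence so does any convex combination; the bijectivity of $\nabla F\colon\Theta\to\{\nabla F(\theta)\}$ with inverse $\nabla F^*=(\nabla F)^{-1}$ then legitimizes the final step. The strict convexity of $\phi$ simultaneously guarantees attainment and uniqueness of the minimizer, so no separate coercivity argument is required. I would remark that this recovers the minimizer of $\min_\theta D_\KL[m:q_\theta]$ directly from moment matching $\nabla F(\theta^*)=m_m$, bypassing the Pythagorean-theorem route of~\cite{Pelletier-2005,LearningMixtureKde-2013}.
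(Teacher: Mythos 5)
Your proof is correct and follows essentially the same route as the paper: apply the semi-closed-form KLD expression of Proposition~\ref{eq:kldmef}, use linearity of expectation to get $E_m[t(x)]=\sum_i w_i\eta_i$, discard the $\theta$-independent terms, and solve the first-order condition $\nabla F(\theta^*)=\sum_i w_i\eta_i$ via the Legendre conjugate. Your added well-posedness check (that the convex combination of the $\eta_i$ lies in the moment parameter space, so $\nabla F^*$ applies) is a small refinement the paper leaves implicit, but it does not change the argument.
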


Consider the following two problems:

\begin{Problem}[Density clustering]\label{pb:dc}
Given a set of $n$ weighted densities $(w_1,p_1), \ldots, (w_n,p_n)$, partition them into $k$ clusters $\calC_1,\ldots,\calC_k$ in order to minimize the $k$-centroid objective function with respect to a statistical divergence $D$:
$\sum_{i=1}^n w_i \min_{l\in\{1,\ldots,k\}}  D[p_i:c_l]$, where $c_l$ denotes the centroid of cluster $\calC_l$ for $l\in\{1,\ldots,k\}$.
\end{Problem}

For example, when all densities $p_i$'s are isotropic Gaussians, we recover the $k$-means objective function~\cite{kmeans-1982}.

\begin{Problem}[Mixture component quantization]\label{pb:mq}
Given a statistical mixture $m(x)=\sum_{i=1}^n w_ip_i(x)$, quantize the mixture components into $k$ densities $q_1,\ldots, q_k$ in order to minimize
$\sum_i w_i \min_{l\in\{1,\ldots,k\}}   D[p_i:q_l]$.
\end{Problem}

Notice that in Problem~\ref{pb:dc}, the input densities $p_i$'s may be mixtures, i.e., $p_i(x)=\sum_{j=1}^{n_i} w_{i,j}p_{i,j}(x)$.
Using the relative information radius, we can cluster a set of distributions (potentially mixtures) into an exponential family mixture, or quantize an exponential family mixture.
Indeed, we can implement an extension of $k$-means~\cite{kmeans-1982} with $k$-centers $q_{\theta_i}$, to assign density $p_i$ to cluster $C_j$ (with center $q_j$), we need to perform basic comparison tests
$D_\KL[p_i:q_{\theta_l}]\geq D_\KL[p_i:q_{\theta_j}]$. 
Provided the cumulant $F$ of the exponential family is in closed-form, we do not need formula for the entropies $h(p_i)$.

Clustering and quantization of densities/mixtures have been widely studied in the literature, see for example~\cite{EntropicClusteringGaussians-2006,ClusteringGaussian-2008,QuantizationBregman-2010,SimplifyingMixtures-2010,ClusteringGMM-2013,MusicGMM-2015,ClusteringGaussian-2016}.

\section{Conclusion}\label{sec:concl}

\begin{figure}
\centering
\includegraphics[width=\textwidth]{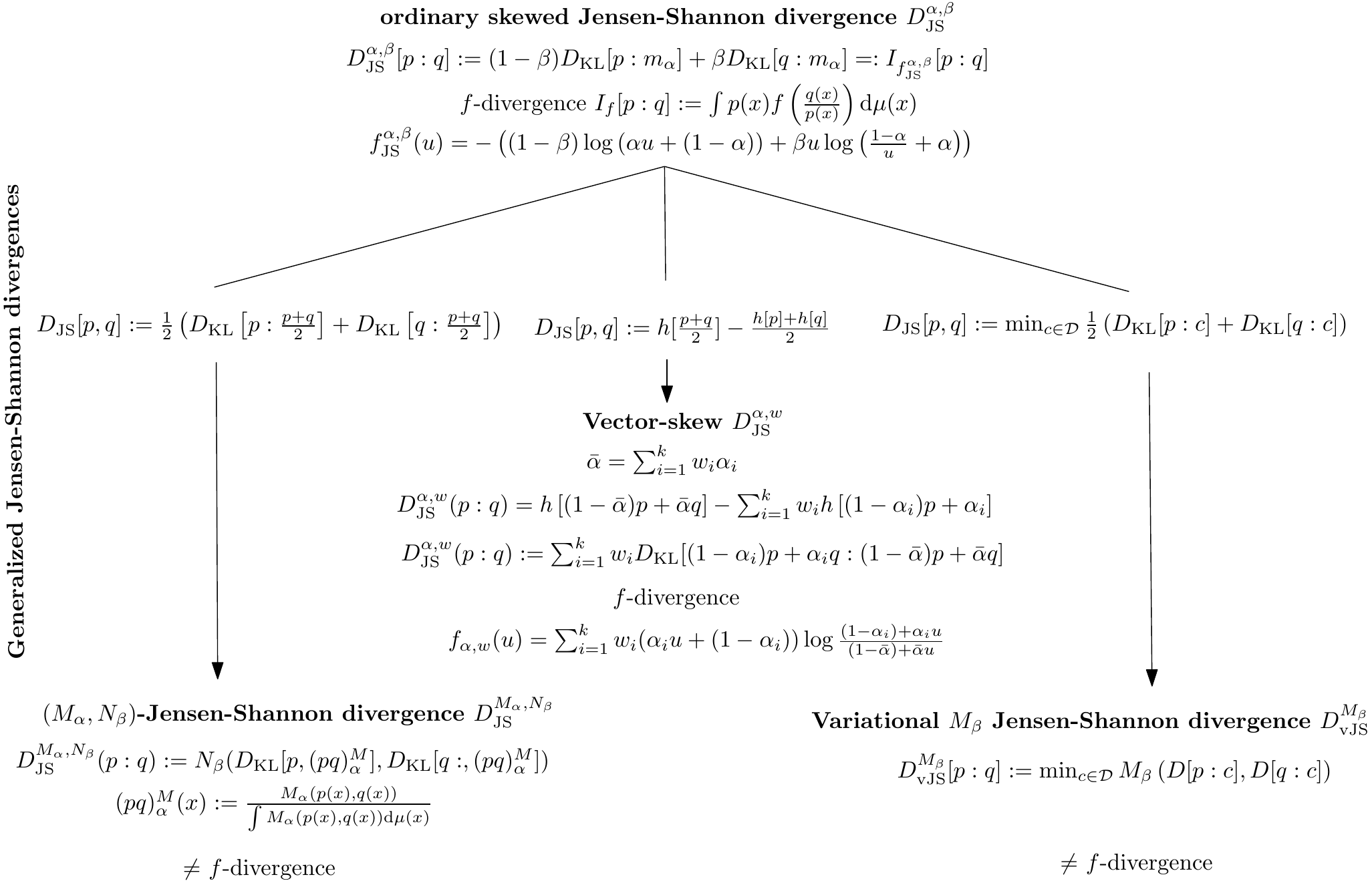}
\caption{Three equivalent expressions of the ordinary (skewed) Jensen-Shannon divergence which yield three different generalizations.\label{fig:genJSdiag}}
\end{figure}

To summarize, the ordinary Jensen-Shannon divergence has been defined in three equivalent ways in the literature:
\begin{eqnarray}
D_\JS[p,q] &:=& \min_{c\in\calD} \frac{1}{2}\left( D_\KL[p:c]+D_\KL[q:c] \right),\label{eq:js1}\\ 
&=&  \frac{1}{2}\left(D_\KL\left[p:\frac{p+q}{2}\right]+D_\KL\left[q:\frac{p+q}{2}\right]\right),\label{eq:js2}\\ 
&=&  h\left[\frac{p+q}{2}\right] - \frac{h[p]+h[q]}{2}. \label{eq:js3}
\end{eqnarray} 

The JSD Eq.~\ref{eq:js1} was studied by Sibson in 1969 within the wider scope of information radius~\cite{Sibson-1969}:
Sibson relied on the R\'enyi $\alpha$-divergences (relative R\'enyi $\alpha$-entropies~\cite{Entropy-1995})
 and recovered the ordinary Jensen-Shannon divergence as a particular case of the $\alpha$-information radius when $\alpha=1$ and $n=2$ points.

The JSD Eq.~\ref{eq:js2} was investigated by Lin~\cite{JS-1991} in 1991 with its connection to the JSD defined in Eq.~\ref{eq:js2}).
In Lin~\cite{JS-1991}, the JSD is interpreted as the arithmetic symmetrization of the $K$-divergence~\cite{nielsen2010family}.
Generalizations of the JSD based on Eq.~\ref{eq:js2} was proposed in~\cite{JSsym-2019} using a generic mean instead of the arithmetic mean.
One motivation was to obtain a closed-form formula for the geometric JSD between multivariate Gaussian distributions which relies on the geometric mixture (see~\cite{VIGJSD-2020} for a use case of that formula in deep learning).
Indeed, the ordinary JSD between Gaussians is not available in closed-form (not analytic).
However, the JSD between Cauchy distributions admit a closed-form formula~\cite{CauchyJSD-2021} despite the calculation of a definite integral of a log-sum term. Instead of using an abstract mean to define a mid-distribution of two densities, one may also consider the mid-point of a geodesic linking these two densities (the arithmetic means $\frac{p+q}{2}$ is interpreted as a geodesic midpoint). 
Recently, Li~\cite{TransportInfoBD-2021} investigated the transport Jensen-Shannon divergence as a symmetrization of the Kullback-Leibler divergence in the $L^2$-Wasserstein space. 
See Section 5.4 of~\cite{TransportInfoBD-2021} and the closed-form formula of Eq.~18 obtained for the transport Jensen-Shannon divergence between two multivariate Gaussian distributions.

Generalization of the identity between the JSD of Eq.~\ref{eq:js2} and the JSD of Eq.~\ref{eq:js3} was studied using a skewing vector in~\cite{JScentroid-2020}. 
Although the JSD is a $f$-divergence~\cite{Csiszar-1964,JScentroid-2020}, the Sibson-$M$ Jensen-Shannon symmetrization of a distance does not belong in general to the class of $f$-divergences.
The variational JSD definition of Eq.~\ref{eq:js1} is implicit while the definitions of Eq.~\ref{eq:js2} and Eq.~\ref{eq:js3} are explicit because the unique optimal centroid $c^*=\frac{p+q}{2}$ has been plugged into the objective function minimized by Eq.~\ref{eq:js1}.

In this paper, we proposed a generalization of the Jensen-Shannon divergence based on the variational definition of the ordinary Jensen-Shannon divergence based on the variational JSD definition of Eq.~\ref{eq:js1}: $D_\vJS[p:q]=\min_c \frac{1}{2}(D_\KL[p:c]+D_\KL[q:c])$.
We introduced the Jensen-Shannon symmetrization of an arbitrary divergence $D$ by considering a generalization of the information radius with respect to an abstract weighted mean $M_\beta$: $D^\vJS_M[p:q]:=\min_c M_\beta(D[p:c],D[q:c])$.
Notice that in the variational JSD, the mean $M_\beta$ is used for averaging divergence values, while the mean $M_\alpha$ in the $(M_\alpha,N_\beta)$ JSD is used to define generic statistical mixtures.
We also consider {\em relative} variational JS symmetrization when the centroid has to belong to a prescribed family of densities.
For the case of exponential family, we showed how to compute the relative centroid in closed form, thus extending the pioneer work of Sibson who considered the relative normal centroid used to calculate the relative normal information radius. 
Figure~\ref{fig:genJSdiag} illustrates the three generalizations of the ordinary skewed Jensen-Shannon divergence.
Notice that in general, the $(M,N)$-JSDs and the variational JDSs are not $f$-divergences (except in the ordinary case).

In a similar vein, Chen et al.~\cite{Chen-BD2-2008} considered the following minimax symmetrization of the scalar Bregman divergence~\cite{Bregman-1967}:
\begin{eqnarray}\label{eq:varbd}
B^\minmax_f(p,q) &:=& \min_c \max_{\lambda\in[0,1]} \lambda B_f(p:c)+(1-\lambda) B_f(q:c),\\
&=& \max_{\lambda\in[0,1]} \lambda B_f(p:\lambda p+(1-\lambda) q)+(1-\lambda) B_f(q:\lambda p+(1-\lambda)),\\
&=& \lambda f(p)+(1-\lambda)f(q)-f(\lambda p+(1-\lambda))
\end{eqnarray}
where $B_f$ denotes the scalar Bregman divergence induced by a strictly convex and smooth function $f$:
\begin{equation}
B_f(p:q)=f(p)-f(q)-(p-q)f'(q).
\end{equation}
They proved that $\sqrt{B^\minmax_f(p,q)}$ yields a metric when $3(\log f'')''\geq ((\log f'')')^2$, and extend the definition to the vector case and conjecture that the square-root metrization still holds in the multivariate case. 
In a sense, this definition geometrically highlights the notion of radius since the minmax optimization amount to find a smallest enclosing ball enclosing~\cite{minmax-2013} the source distributions. The circumcenter also called the Chebyshev center~\cite{ChebyshevAlphaDiv-2020} is then the mid-distribution instead of the centroid for the information radius.
The term ``information radius'' is well-suited to measure the distance between two points for an arbitrary distance $D$.
Indeed, the JS-symmetrization of $D$
 is defined by  $D^\JS[p:q]:=\min_c \{\frac{1}{2}D[p:c]+\frac{1}{2}D[q:c]\}$.
When $D[p:q]=D_E[p:q]=\|p-q\|$ is the Euclidean distance, we have $c=\frac{p+q}{2}$, and 
$D[p:c]=D[q:c]=\frac{1}{2}\|p-q\|=:r$ (i.e., the radius being half of the diameter $\|p-q\|$). 
Thus $D^\JS_E[p:q]=r$, hence the term chosen by Sibson~\cite{Sibson-1969} for $D^\JS$: information radius.
Besides providing another viewpoint, variational definitions of divergences are proven useful in practice (e.g., for estimation).
For example, a variational definition of the R\'enyi divergence generalizing the Donsker-Varadhan variational formula of the KLD is given in~\cite{variationalRenyi-2020} which is used to estimate the R{\'e}nyi Divergences.

\vskip 0.5cm
\noindent{\bf Acknowledgments}: We warmly thank Rob Brekelmans (Information Sciences Institute, University of Southern California, USA) for discussions and feedback related to the contents of this work. The author thanks the reviewers for valuable feedback, comments, and suggestions.

\bibliographystyle{plain}
\bibliography{ClusteringQuantizingDensitiesBIB}

\end{document}